%%%%%%%%%%%%%%%%%%%%%%%%%%%%%%%%%%%%%%%%%%%%%%%%%%%%%%%%%%%%%%%%%%%%%%%%%%%%%%%%
%2345678901234567890123456789012345678901234567890123456789012345678901234567890
%        1         2         3         4         5         6         7         8

\documentclass[12pt,draftcls,onecolumn]{IEEEtran}

%\documentclass[journal]{IEEEtran}

%\documentclass[letterpaper, 10 pt, conference]{ieeeconf}   % Comment this line out
                                                           % if you need a4paper
%\documentclass[a4paper, 10pt, conference]{ieeeconf}       % Use this line for a4
                                                           % paper

%\IEEEoverridecommandlockouts                               % This command is only
                                                           % needed if you want to
                                                           % use the \thanks command
%\overrideIEEEmargins

% See the \addtolength command later in the file to balance the column lengths
% on the last page of the document
\usepackage{graphicx}
\usepackage{array}
\usepackage{mathrsfs}
\usepackage{amsfonts}
\usepackage{setspace}
\usepackage{amsmath}
\usepackage{amssymb}
\usepackage{mathrsfs}
\input xy
\xyoption{all}

\newtheorem{theorem}{Theorem}
\newtheorem{lemma}{Lemma}

\newtheorem{definition}{Definition}
\newtheorem{corollary}{Corollary}

\newtheorem{remark}{Remark}
\newtheorem{example}{Example}
\newtheorem{problem}{Problem}

% The following packages can be found on http:\\www.ctan.org
%\usepackage{graphics} % for pdf, bitmapped graphics files
%\usepackage{epsfig} % for postscript graphics files
%\usepackage{mathptmx} % assumes new font selection scheme installed
%\usepackage{times} % assumes new font selection scheme installed
%\usepackage{amsmath} % assumes amsmath package installed
%\usepackage{amssymb}  % assumes amsmath package installed

\begin{document}

 \title{Technical Report: NUS-ACT-11-001-Ver.2:\\
 Fault-tolerant Cooperative Tasking for Multi-agent Systems\\Preprint, Submitted for publication}
 %\tnotetext[label1]{}
 \author{Mohammad~Karimadini,
        and~Hai~Lin,
\thanks{M. Karimadini and H. Lin are both from the Department of Electrical and Computer Engineering, National University of Singapore,
Singapore. Corresponding author, H. Lin {\tt\small elelh@nus.edu.sg}
} } \maketitle \thispagestyle{empty} \pagestyle{empty}

\begin{abstract}
A natural way for cooperative tasking in multi-agent systems is
through a top-down design by decomposing a global task into subtasks
for each individual agent such that the accomplishments of these
subtasks will guarantee the achievement of the global task. In our
previous works \cite{Automatica2010-2-agents-decomposability,
TAC2011-n-agents-decomposability}, we presented necessary and
sufficient conditions on the decomposability of a global task
automaton between cooperative agents. As a follow-up work, this
paper deals with the robustness issues of the proposed top-down
design approach with respect to event failures in the multi-agent
systems. The main concern under event failure is whether a
previously decomposable task can still be achieved collectively by
the agents, and if not, we would like to investigate that under what
conditions the global task could be robustly accomplished. This is
actually the fault-tolerance issue of the top-down design, and the
results provide designers with hints on which events are fragile
with respect to failures, and whether redundancies are needed. The
main objective of this paper is to identify necessary and sufficient
conditions on failed events under which a decomposable global task
can still be achieved successfully. For such a purpose, a notion
called passivity is introduced to characterize the type of event
failures. The passivity is found to reflect the redundancy of
communication links over shared events, based on which necessary and
sufficient conditions for the reliability of cooperative tasking
under event failures are derived, followed by illustrative examples
and remarks for the derived conditions.
\end{abstract}

%\begin{keyword}
%Global task decomposition \sep Fault-tolerant \sep Decentralized
%bisimilarity control \sep Multi-agent system.
%%% keywords here, in the form: keyword \sep keyword
%
%%% MSC codes here, in the form: \MSC code \sep code
%%% or \MSC[2008] code \sep code (2000 is the default)
%
%\end{keyword}

%%%%%%%%%%%%%%%%%%%%%%%%%%%%%%%%%%%%%%%%%%%%%%%%%%%%%%%%%%%%%%%%%%%%%%
%%%%%%%%%%%%%%%%%%%%%%%%%%%%%%%%%%%%%%%%%%%%%%%%%%%%%%%%%%%%%%%%%%%%%%
%%%%%%%%%%%%%%%%%%%%%%%%%%%%%%%%%%%%%%%%%%%%%%%%%%%%%%%%%%%%%%%%%%%%%%
\section{Introduction}\label{Introduction}
Multi-agent system has emerged as a hot research area with strong
support from a wide range of applications such as power grids,
transportation networks, ubiquitous computation, and multi-robot
systems \cite{Lima2005, Georgios2009, Ji2009}. The significance of
multi-agent systems roots in the power of parallelism and
cooperation between simple components that lead to sophisticated
capabilities and more robustness and functionalities than individual
multi-skilled agents \cite{Lesser1999, Choi2009, Kazeroon2009}.
%However, the design of multi agent systems has imposed new
%theoretical and practical challenges that fall beyond the
%traditional path planning, output regulation, or formation control
%\cite{Tabuada2006, Belta2007}.
%
%So far, most design approaches for multi-agent systems follow the
%bottom-up paradigm and mainly draw inspirations from the swarming
%behaviors of natural systems \cite{Bonabeau1999, Reynolds1987,
%Hinchey2005, Jadbabaie2003}. It is quite intuitive, but sometimes
%unexpected behavior could emerge, for which one needs to redesign
%the local interactions and re-investigate the group behavior by
%usually empirical or sometimes analytical studies. Since it is
%difficult to guarantee a desired global behavior or avoid
%undesirable behaviors from the bottom-up design, this trial and
%error process may quickly make the design intractable. On the other
%hand, a natural way for cooperative tasking in multi-agent systems
%is through a top-down design, by decomposing a global task
One of the key problems in multi-agent systems is top-down
cooperative tasking, through which a global task is decomposed
into
subtasks for each individual agent such that the accomplishments of
these subtasks will guarantee the achievement of the global task.

For such a purpose, in our previous work
\cite{Automatica2010-2-agents-decomposability} a top-down design
approach for multi-agent cooperative tasking was proposed for two
agents and then generalized in
\cite{TAC2011-n-agents-decomposability} into an arbitrary finite
number of agents.
As the main contribution,
\cite{Automatica2010-2-agents-decomposability,
TAC2011-n-agents-decomposability} identified necessary and
sufficient conditions under which a deterministic task automaton is
decomposable with respect to parallel composition and natural
projections into local event sets, namely, the task automaton is
bisimilar to the parallel composition of its natural projections.
Moreover, it has been shown that if the task automaton is
decomposable and local supervisors are designed to satisfy local
specification automata, then the entire closed loop system satisfies
the original global specification. It is worth noting here that the
determinism of global task automaton does not reduce its
decomposability in the sense of bisimulation into its
decomposability in the sense of language equivalence
\cite{Mukund2002}, nor into separability of its language
\cite{Willner1991}, since in general local task automata obtained by
natural projection could be nondeterministic, in general (see
Example \ref{Undecomposable DC3-truncation} in the Appendix).

% The reason is that even for a deterministic
%global automaton, the parallel composition of its local projections
%might be nondeterministic. Another point is that the local task
%automata, obtained by natural projection of the global deterministic
%automaton could be nondeterministic and are not necessarily
%simulated by the global automaton. Nevertheless, the objective is
%bisimulation of the global task automaton and parallel composition
%of its local task automata, in order to guarantee the global
%specification by local controllers.

%Once the supervisor is synthesized, the controlled system is
%required to safely work in spite of some faults and changes in the
%system.
Once a multi-agent system is designed, its safety becomes a crucial
property across the agents in order to prevent the uncompensable
consequences for the system and users. Failures on the other hand
are usually unavoidable due to the large scale nature and complex
interactions among the distributed agents. It is therefore very
important to introduce some degree of redundancy into the design so
as to achieve fault-tolerance. Towards this end, this paper
represents a continuation of the works in
\cite{Automatica2010-2-agents-decomposability,
TAC2011-n-agents-decomposability}, and deals with the robustness
issues of the proposed top-down design approach with respect to
event failures in the multi-agent systems. The main concern under
failure is whether a previously decomposable task still can be
achieved collectively by the agents. Please note that no global
information on failures is assumed, and each agent is only aware of
failures around itself and just trying to accomplish its previously
assigned subtask (assume that the global task is decomposable before
failures, and subtasks are obtained, accordingly). An interesting
question is whether these agents can achieve the original global
task in spite of event failures. If not, we would like to ask under
what conditions the global task could be robustly accomplished. This
is actually the fault-tolerance issue of the top-down design, and
the results provide designers hints on which events are fragile with
respect to failures, and whether redundancies are needed for sharing
of some events. It is desired to share as few number of events as
possible through the communication links to reduce the bandwidth,
and hence, the cost of the design. The main objective of this paper
is to identify necessary and sufficient conditions on failed events
under which a decomposable global task can still be achieved
successfully between cooperative agents.

This work differs from diagnosability and isolation problems
\cite{Sampath1995} whose interest is on detection and identification
of the type of faults. In this work the faults are known and the
question is the tolerance of systems in spite of the faults. It also
differs from reliable supervisory control \cite{Takai2000, Liu2010}
that seeks the minimal number of
 supervisors required for correct functionality of the supervised
 systems. Another different problem is robust supervisory control
  \cite{Lin1993} that considers the plant as a set of possible plants and
 designs supervisor applicable for the whole range of plants.

This work is related to the fault-tolerant supervisory control that
has been widely studied in the context of discrete event systems.
For examples, \cite{Darabi2003} proposed switching to another
supervisor after fault detection. In another work,
\cite{Rohloff2005}, the author proposed to re-synthesis the
supervisor upon the fault occurrence. A framework for fault-tolerant
supervisory control has been proposed in \cite{Lafortune1990} and
further explored in \cite{Jensen2003} by enforcing given
specifications for non-faulty and faulty parts of the plant to
ensure that the plant recovers from any fault within a bounded
delay, such that the recovered plant is equivalent to the non-faulty
plant. In \cite{Qin2008} a fault is modeled as an uncontrollable
event, that its occurrence causes a faulty behavior. They provided a
necessary and sufficient condition for the existence of supervisor
under failures, based on controllability, observability and
relative-closure, together with the notions of state-stability
\cite{Brave1990, Ozveren1991}, and language-stability
\cite{Kumar1993, Willner1995}. In \cite{Saboori2005}, a fault
recovery result has been proposed by introducing normal, transient
and recovery modes, such that the language of the closed loop
systems is equal to a given language of the normal mode. Most of
these works however address the
 language specifications and deal with decentralized supervisory
 control with distributed supervisor and monolithic plant.

In this paper, continuing the works in
\cite{Automatica2010-2-agents-decomposability,
TAC2011-n-agents-decomposability}, it is firstly observed that a
necessary condition for preserving the decomposability is that the
failed events  can only be shared events and could be those that are
only received from the other agents or sent to others, redundantly.
In other words, a necessary condition for failed events is that they
are not produced by the sensors/actuators of the corresponding
agent, and that the failed events are not sent to other agents,
unless there exist some alternative agents to relay them. We will
call these events as passive events in the agent. Passive events
indeed refer to the shared events through redundant communication
links. Based on this notation, it seems that the failure of passive
events have no effect on decomposability, as they do not fail in the
sender agents and the receiver is just  no longer informed about
those events. However, it will be shown that although passivity of
failure events is a necessary condition for preserving the
decomposability, some additional conditions are required for the
task automaton to remain decomposable. The intuitive reason is that
when a shared event fails, the corresponding agent can no longer use
its information as a part of decision making on the order or switch
between transitions. Moreover, the failure should satisfy some
criteria to ensure that after the failures, the parallel composition
of local automata neither generates a new string that is not allowed
in the global automaton, nor prevents a string that is allowed in
the global task automaton. In particular, while the passivity of
failed events is a necessary condition to preserve the
decomposability, it is shown that for a deterministic task automaton
that experiences failures on passive events, the task automaton
remains decomposable if and only if any required decisions on
order/switch between any pair of events can be accomplished by at
least one of the agents after failure; no illegal string is allowed
and no legal string is prevented by the composition of local task
automata, after the failure. This work generalizes the preliminary
work on task decomposition under failure in
\cite{CCC2011-n-agents-Event-Failure}, from two agents into an
arbitrary finite number of agents, providing the proofs and
illustrative examples. It furthermore shows that under the passivity
of failed events together with the proposed conditions, a previously
achieved task automaton can be still achieved by the team of agents.

The rest of the paper is organized as follows. Section
\ref{Preliminaries}
 provides preliminary lemmas, notations,
definitions and recalls the necessary and sufficient conditions on
decomposition of an automaton with respect to parallel composition
and local event sets. The fault-tolerant task decomposability and
multi-tasking problems are formulated in Section \ref{Problem
formulation}. Sections \ref{Task Decomposability under event
failure} presents the main result on decomposability under event
failures and introduces the necessary and sufficient conditions
under which a decomposable task automaton remains decomposable in
spite of event failures, followed by illustrative examples for each
condition. Next, it is shown in Sections \ref{Cooperative tasking
under event failure} that under passivity and the proposed
conditions, if a previously decomposable task automaton has been
achieved globally by local controllers, it will remain satisfied, in
spite of event failures. As a special case, Section \ref{Special
case - 2-agent} provides more insight on global decision making on
selections and orders of transitions, in a two-agent case. Finally,
the paper concludes with remarks and discussions in Section
\ref{Conclusion}. The proofs of lemmas are given in the Appendix.

%\section{REVIEW OF TASK AUTOMATON DECOMPOSITION}\label{Preliminary}
\section{Preliminaries}\label{Preliminaries}
The proposed top-down approach in
\cite{Automatica2010-2-agents-decomposability,
TAC2011-n-agents-decomposability} investigated the deterministic
global task automata and introduced necessary and sufficient
conditions under which the task automaton is decomposable with
respect to parallel composition and natural projections into local
event sets, such that the parallel composition of local task
automata bisimulates the global task automaton. It was also shown
that fulfilment of local task automata, leads to satisfaction (in
the sense of bisimulation) of the global task automaton. We then
first recall the definition of an automaton \cite{Kumar1999}.
\begin{definition}(Automaton)
A deterministic automaton is a tuple $A := \left(Q, q_0, E, \delta
\right)$ consisting of a set of states $Q$; an initial state $q_0\in
Q$; a set of events $E$ that causes transitions between the states,
and a transition relation $\delta \subseteq Q \times E\times Q$,
with partial map $\delta: Q \times E \to Q$, such that $(q, e,
q^{\prime})\in \delta$ if and only if state $q$ is transited to
state $q^{\prime}$ by event $e$, denoted by
$q\overset{e}{\underset{}\rightarrow } q^{\prime}$ (or $\delta(q, e)
= q^{\prime}$). A nondeterministic automaton is a tuple $A :=
\left(Q, q_0, E, \delta \right)$ with a partial transition map
$\delta: Q \times E \to 2^Q$, and if hidden transitions
($\varepsilon$-moves) are also possible, then a nondeterministic
automaton with hidden moves is defined as $A := \left(Q, q_0,
E\cup\{\varepsilon\}, \delta \right)$ with a partial map $\delta: Q
\times (E\cup\{\varepsilon\} )\to 2^Q$. For a nondeterministic
automaton the initial state can be generally from a set
$Q_0\subseteq Q$. Given a nondeterministic automaton $A$, with
hidden moves, the $\varepsilon$-closure of $q\in Q$, denoted by
$\varepsilon^*_A(q)\subseteq Q$, is recursively defined as: $q\in
\varepsilon^*_A(q)$; $q^{\prime}\in \varepsilon^*_A(q)\Rightarrow
\delta(q^{\prime}, \varepsilon)\subseteq \varepsilon^*_A(q)$. The
transition relation can be extended to a finite string of events,
$s\in E^*$, where $E^*$ stands for $Kleene-Closure$ of $E$ (the set
of all finite strings over elements of $E$). For an automaton
without hidden moves, $\varepsilon^*_A(q) = \{q\}$, and the
transition on string is inductively defined as
$\delta(q,\varepsilon) = q$ (empty move or silent transition), and
$\delta(q,se)=\delta(\delta(q,s),e)$ for $s\in E^*$ and $e\in E$.
For an automaton $A$, with hidden moves, the extension of transition
relation on string, denoted by $\delta: Q \times E^*\to 2^Q$, is
inductively defined as: $\forall q \in Q, s \in E^*, e\in E$:
$\delta(q, \varepsilon):= \varepsilon^*_A(q)$ and $\delta(q,
se)=\varepsilon^*_A(\delta(\delta(q,s),e)) = \left[ \overset{} {
\underset{q^{\prime}\in \delta(q, s)}{\cup} }\left\{
 \overset{}  { \underset{ q^{\prime\prime}\in \delta(q^{\prime}, e) } {\cup}
 }  \varepsilon^*_A(q^{\prime\prime}) \right\}\right]$.

The operator $Ac(.)$ \cite{Cassandras2008} is then defined by
excluding the states and their attached transitions that are not
reachable from the initial state as $Ac(A) = \left(Q_{ac}, q_0, E,
\delta_{ac} \right)$ with $Q_{ac}=\{q\in Q|\exists s\in E^*, q \in
\delta (q_0, s)\}$ and $\delta_{ac}=\delta|Q_{ac}\times E\rightarrow
Q_{ac}$, restricting $\delta$ to the smaller domain of $Q_{ac}$.
Since $Ac(.)$ has  no effect on the behavior of the automaton, from
now on we take $A = Ac(A)$.
\end{definition}

We focus on deterministic global task automata that are simpler to
be characterized, and cover a wide class of specifications. The
qualitative behavior of a deterministic system is described by the
set of all possible sequences of events starting from the initial
state. Each such a sequence is called a string, and the collection
of strings represents the language generated by the automaton,
denoted by $L(A)$. The existence of a transition over a string $s\in
E^*$ from a state $q\in Q$ is denoted by $\delta(q, s)!$.
Considering a language $L$, by $\delta(q, L)!$ we mean that $\forall
\omega \in L: \delta(q, \omega)!$.

%Next, the successive event pair and adjacent event pair for an
%automaton are defined as follows.
%
%\begin{definition}(Successive and adjacent event pairs)
%Two events $e_1$ and $e_2$ are called successive events if $\exists
%q\in Q: \delta(q,e_1)! \wedge \delta(\delta(q,e_1),e_2)!$ or
%$\delta(q,e_2)! \wedge \delta(\delta(q,e_2),e_1)!$. They are called
%adjacent events if $\exists q\in Q:\delta(q,e_1)! \wedge
%\delta(q,e_2)!$.
%\end{definition}

To compare the task automaton and its decomposed automata, we use
the bisimulation relations \cite{Cassandras2008}.
\begin{definition}(Simulation and Bisimulation)
\label{simulation} Consider two automata $A_i=( Q_i, q_i^0$, $E,
\delta _i)$, $i=1, 2$. A relation $R\subseteq Q_1 \times Q_2$ is
said to be a simulation relation from $A_1$ to $A_2$ if $(q_1^0,
q_2^0) \in R$, and $\forall\left( {q_1 ,q_2 } \right) \in R,
\delta_1(q_1, e)= q'_1$, then $\exists q_2^{\prime}\in Q_2$ such
that $\delta_2(q_2, e)=q'_2, \left( {q'_1 ,q'_2 } \right) \in R$.

If $R$ is defined for all states and all events in $A_1$, then $A_1$
is said to be similar to $A_2$ (or $A_2$ simulates $A_1$), denoted
by $A_1\prec A_2$ \cite{Cassandras2008}.

If $A_1\prec A_2$, $A_2\prec A_1$, with a symmetric relation, then
$A_1$ and $A_2$ are said to be bisimilar (bisimulate each other),
denoted by $A_1\cong A_2$ \cite{Zhou2006}. In general, bisimilarity
implies languages equivalence but the converse does not necessarily
hold \cite{Alur2000}.
\end{definition}

In these works natural projection is used to obtain local tasks,
since each agent has limited degree of sensing and actuation and
hence it is provided with local information and functionalities:
those events inside its local event set. Each agent may share some
events with its neighbors to facilitate the cooperative control,
using interactions between the connected agents. Natural projection
is defined formally as follows.
\begin{definition}(Natural Projection on String)
Consider a global event set $E$ and its local event sets $E_i$,
$i=1,2,...,n$, with $E=\overset{n}{\underset{i=1}{\cup} } E_i$.
Then, the natural projection $p_i:E^*\rightarrow E_i^*$ is
inductively defined as
 $p_i(\varepsilon)=\varepsilon$, and
 $\forall s\in E^*, e\in E:p_i(se)=\left\{
  \begin{array}{ll}
  p_i(s)e & \hbox{if $e\in E_i$;} \\
  p_i(s) & \hbox{otherwise.}
  \end{array}
\right.$

Accordingly, inverse natural projection $p_i^{-1}: E_i^* \to
2^{E^*}$ is defined on an string $t\in E_i^*$ as $p_i^{-1}(t):=
\{s\in E^*|p_i(s) = t\}$.
\end{definition}

The natural projection is also defined on automata as $P_i:
A\rightarrow A$, where, $A$ is the set of finite automata and
$P_i(A_S)$ are obtained from $A_S$ by replacing its events that
belong to $E\backslash E_i$ by $\varepsilon$-moves, and then,
merging the $\varepsilon$-related states, forming equivalent classes
defined as follows.
\begin{definition}(Equivalent class of states, \cite{Morin1998})\label{Equivalent class of states}
Consider an automaton $A_S=(Q, q_0, E, \delta)$ and a local event
set $E_i\subseteq E$. Then, the relation $\sim_{E_i}$ is the
equivalence relation on the set $Q$ of states such that $\delta(q,
e) = q^{\prime}\wedge e\notin E_i\Rightarrow q\sim_{E_i}
q^{\prime}$, and $[q]_{E_i}$ denotes the equivalence class of $q$
defined on $\sim_{E_i}$. In this case, $q$ and $q^{\prime}$ are said
to be $\varepsilon$-related. The set of equivalent classes of states
over $\sim_{E_i}$, is denoted by $Q_{/\sim_{E_i}}$ and defined as
$Q_{/\sim_{E_i}} = \{[q]_{E_i}|q\in Q\}$.
%When it is clear from the context,
%$\sim_i$ and $[q]_i$ are used instead of $\sim_{E_i}$ and
%$[q]_{E_i}$, respectively.
\end{definition}

The natural projection is then formally defined on an automaton as
follows.
\begin{definition}(Natural Projection on Automaton)\label{Natural Projection on Automaton}
Consider an automaton $A_S=(Q, q_0, E, \delta)$ and a local event
set $E_i\subseteq E$. Then, $P_i(A_S)=(Q_i = Q_{/\sim_{E_i}},
[q_0]_{E_i}, E_i, \delta_i)$, with $\delta_i([q]_{E_i}, e) =
[q^{\prime}]_{E_i}$ if there exist states $q_1$ and $q_1^{\prime}$
such that $q_1\sim_{E_i} q$, $q_1^{\prime}\sim_{E_i} q^{\prime}$,
and $\delta(q_1, e) = q^{\prime}_1$.
\end{definition}

To investigate the interactions of transitions between automata,
particularly between $P_i(A_S)$, $i = 1, \ldots, n$, the
synchronized product of languages is defined as follows.
%
%\begin{definition}(Path Automaton)
%A sequence $q_1\overset{e_1}{\underset{}\rightarrow }
%q_2\overset{e_2}{\underset{}\rightarrow
%}...\overset{e_n}{\underset{}\rightarrow }q_n$ is called a path
%automaton and is defined as $PA(q_1, s)= (\{q_1,...,q_n\}, \{q_1\}$,
%$\{e_1,...,e_n\}$,$ \delta_{PA})$ with  $\delta_{PA}(q_i,
%e_i)=q_{i+1}$, $i=1,...,n-1$.
%\end{definition}

\begin{definition}(Synchronized product of languages
\cite{Willner1991}) Consider a global event set $E$ and local event
sets $E_i$, $ i = 1, \ldots, n$, such that $E =
\overset{n}{\underset{i=1}{\cup} } E_i$. For a finite set of
languages $\{L_i \subset E_i^*\}_{i = 1}^n$, the synchronized
product (language product) of $\{L_i\}$, denoted by
$\overset{n}{\underset{i=1}{|} } L_i$, is defined as
$\overset{n}{\underset{i=1}{|} } L_i = \{s \in E^*|\forall
i\in\{1,\ldots, n\}: p_i(s)\in L_i\} =
\overset{n}{\underset{i=1}{\cap} } p_i^{-1}(L_i)$.
\end{definition}

%Using the language product, it is then possible to characterize the
%language of parallel composition of two automata in terms of their
%languages as
%\begin{lemma}(\cite{Willner1991}) \label{langauge of parallel automata}
%Consider two automata $A_1$ and $A_2$, with respective event sets
%$E_1$ and $E_2$. Then $L(A_1||A_2) = L(A_1)|L(A_2) =
%P_1^{-1}(L(A_1))\cap P_2^{-1}(L(A_2))$ with $P_i: E_1\cup E_2 \to
%E_i$, $i = 1, 2$.
%\end{lemma}
%
%This characterization is particularly true for the special case,
%when the composed automata have no branches, and consisted of only
%sequences of states and events.
%\begin{definition}(Path Automaton)
%A sequence $q_1\overset{e_1}{\underset{}\rightarrow }
%q_2\overset{e_2}{\underset{}\rightarrow
%}...\overset{e_n}{\underset{}\rightarrow }q_n$ is called a path
%automaton and is defined as $PA(q_1, s)= (\{q_1,...,q_n\}, \{q_1\}$,
%$\{e_1,...,e_n\}$,$ \delta_{PA})$ with  $\delta_{PA}(q_i,
%e_i)=q_{i+1}$, $i=1,...,n-1$.
%\end{definition}
%The language of parallel composition of two path automata $PA(q_1,
%s)$ and $PA^{\prime}(q_1^{\prime}, s^{\prime})$ is then defined as
%$L(PA(q_1, s)||PA^{\prime}(q_1^{\prime}, s^{\prime})) =
%s|s^{\prime}$.

Then, capturing the interactions of agents, parallel composition
(synchronized product) is used for two purposes: first to define the
decomposition (as the parallel composition of local tasks should be
equivalent to the original task), and second, to define the top-down
cooperative control, such that the parallel composition local closed
loop systems be equivalent to the global specification.

The parallel composition (synchronous product) is a way of modeling
of interactions between agents as it allows local agents to transit
on their own private events and restricts them to synchronize on the
shared events, those events that are required for cooperation on
common actions or decision makings on orders or selections between
events.
%Parallel composition can be used for both language model
%(pertaining the language equivalence-based cooperative control) as
%well as automaton model (utilizing the bisimulation-based top-down
%design).

\begin{definition} (Parallel Composition) \label{parallel
composition}\\
Let $A_i=\left( Q_i,q_i^0,E_i,\delta _i\right)$, $i=1,2$ be
automata. The parallel composition (synchronous composition) of
$A_1$ and $A_2$ is the automaton $A_1||A_2=\left( Q = Q_1 \times
Q_2, q_0 = (q_1^0, q_2^0), E = E_1 \cup E_2, \delta\right)$, with
$\delta$ defined as $\forall (q_1, q_2)\in Q$, $e\in E$:
$\delta(\left(q_1, q_2), e\right)=\left\{
\begin{array}{ll}
    \left(\delta_1(q_1, e), \delta_2(q_2, e)\right), & \hbox{if $\left\{\begin{array}{ll}
    \delta _1(q_1,e)!, \delta _2(q_2,e)! \\
     e\in E_1 \cap E_2
\end{array}\right.$};\\
    \left(\delta_1(q_1, e), q_2\right), & \hbox{if $\delta _1(q_1,e)!, e\in E_1 \backslash E_2$;} \\
    \left(q_1, \delta_2(q_2, e)\right), & \hbox{if $\delta _2(q_2,e)!, e\in E_2 \backslash E_1$;} \\
    \hbox{undefined}, & \hbox{otherwise.}
\end{array}\right.$

The parallel composition of $A_i$, $i=1,2,...,n$ is called parallel
distributed system (or concurrent system), and is defined based on
the associativity property of parallel composition
\cite{Cassandras2008} as $\overset{n}{\underset{i=1}{\parallel\ }
}A_i=A_1\parallel\ ...\parallel\   A_n = A_n\parallel \left(A_{n-1}
\parallel \left( \cdots \parallel \left( A_2\parallel
A_1 \right)\right)\right)$.

The set of labels of local event sets containing an event $e$ is
called the set of locations of $e$, denoted by $loc(e)$ and is
defined as $loc(e) =\{i\in\{1,\ldots, n\}| e\in E_i\}$.
%In the state space of
%$\overset{n}{\underset{i=1}{\parallel\ } }A_i$, for any state $q\in
%\overset{n}{\underset{i=1}{\prod\ } }Q_i$, $q[i]$ means the $i-th$
%component of $q$.
\end{definition}

In this  sense, the decomposability of an automaton with respect to
parallel composition and natural projections is defined as follows.
\begin{definition}(Automaton decomposability)\label{Automaton decomposability}
 A task automaton $A_S$ with the event set $E$ and local event sets
 $E_i$, $i=1,..., n$, $E = \overset{n}{\underset{i=1}{\cup} } E_i$, is
said to be decomposable with respect to parallel composition and
natural projections if $\overset{n}{\underset{i=1}{\parallel} } P_i
\left( A_S \right) \cong A_S$.
\end{definition}
%Preliminary notations and definitions for automaton, simulation and
%bisimulation, natural projection on strings and automata can be
%found in \cite{Automatica2010-2-agents-decomposability} and
%references therein, and are not repeated here, due to limitation in
%space.

In general, the task automaton can be nondeterministic.
Decomposition of nondeterministic automaton however is very
difficult to be characterized, due to interleaving of
nondeterministic transitions between local task automata. For
deterministic case, necessary and sufficient conditions for the
decomposability of a deterministic task automaton $A_S$ were
proposed in \cite{Automatica2010-2-agents-decomposability} with
respect to two cooperative agents and then generalized into an
arbitrary finite number of agents, as follows.

\begin{lemma}(Corollary $1$ in
\cite{TAC2011-n-agents-decomposability}): \label{Task
Decomposability Theorem for n agents} A deterministic automaton $A_S
= \left( {Q, q_0 , E = \bigcup\limits_{i = 1}^n {E_i , \delta } }
\right)$ is decomposable with respect to parallel composition and
natural projections $P_i$, $i=1,...,n$ such that $A_S \cong \mathop
{||}\limits_{i = 1}^n P_i \left( {A_S } \right)$ if and only if
$A_S$ satisfies the following decomposability conditions ($DC$):
\begin{itemize}
\item $DC1$: $\forall e_1,
e_2 \in E, q\in Q$: $[\delta(q,e_1)!\wedge \delta(q,e_2)!]\\
\Rightarrow [\exists E_i\in\{E_1, \cdots, E_n\}, \{e_1,
e_2\}\subseteq E_i]\vee[\delta(q, e_1e_2)! \wedge \delta(q,
e_2e_1)!]$;
\item $DC2$: $\forall e_1, e_2 \in E,  q\in Q$, $s\in E^*$: $[\delta(q,
e_1e_2s)!\vee \delta(q, e_2e_1s)!]\\ \Rightarrow [\exists
E_i\in\{E_1, \cdots, E_n\}, \{e_1, e_2\}\subseteq E_i]\vee [
\delta(q, e_1e_2s)!\wedge \delta(q, e_2e_1s)!]$;
\item $DC3$: $\delta(q_0, \mathop |\limits_{i = 1}^n p_i \left( {s_i }
\right))!$, $\forall \{s_1, \cdots, s_n\}\in \tilde L\left( {A_S }
\right)$, $\exists s_i, s_j \in \{s_1, \cdots, s_n\}, s_i \neq s_j$,
where, $\tilde L\left( {A_S } \right) \subseteq L\left( {A_S }
\right)$ is the largest subset of $L\left( {A_S } \right)$ such that
$\forall s\in \tilde L\left( {A_S } \right), \exists s^{\prime} \in
\tilde L\left( {A_S } \right),\;\exists E_i ,E_j  \in \left\{ {E_1
,...,E_n } \right\},
 i \ne j,p_{E_i  \cap E_j } \left( s \right)$ and $p_{E_i  \cap E_j } \left( s^{\prime} \right)$  start with the same
 event, and
\item $DC4$: $\forall i\in\{1,...,n\}$, $x, x_1, x_2 \in Q_i$, $x_1\neq x_2$,
$e\in E_i$, $t\in E_i^*$, $\delta_i (x, e)=  x_1$, $\delta_i (x, e)=
x_2$: $\delta_i (x_1, t)! \Leftrightarrow \delta_i(x_2, t)!$.
 \end{itemize}
\end{lemma}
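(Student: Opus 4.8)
The plan is to prove the two directions of the biconditional separately, exploiting the structural fact that one of the two simulation inclusions needed for the bisimulation $A_S \cong \mathop{||}\limits_{i=1}^n P_i\left(A_S\right)$ always holds. First I would establish $A_S \prec \mathop{||}\limits_{i=1}^n P_i\left(A_S\right)$ for \emph{any} deterministic $A_S$, independently of the conditions: every transition $q\overset{e}{\rightarrow}q'$ of $A_S$ induces in each $P_i(A_S)$ either the genuine move $[q]_{E_i}\overset{e}{\rightarrow}[q']_{E_i}$ (when $e\in E_i$) or leaves that component in the same equivalence class (when $e\notin E_i$, since then $q\sim_{E_i}q'$), and by Definition \ref{parallel composition} these component moves combine into a single product transition. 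Hence the relation pairing $q$ with the product state reached by the same string is a simulation from $A_S$ into the product, and the whole theorem reduces to characterizing when the reverse simulation $\mathop{||}\limits_{i=1}^n P_i\left(A_S\right)\prec A_S$ holds and when the two inclusions can be welded into one symmetric bisimulation.

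For necessity (that $A_S\cong\mathop{||}\limits_{i=1}^n P_i\left(A_S\right)$ forces the conditions) I would argue by contradiction, showing each violated condition produces a product state that cannot be matched in $A_S$. If $DC1$ fails there are a state $q$ and events $e_1,e_2$ both enabled at $q$, not contained in any single $E_i$, with, say, $\neg\delta(q,e_1e_2)!$; since no $E_i$ contains both, firing $e_1$ in the product leaves every component in $loc(e_2)$ untouched, so $e_2$ is still enabled and the product reaches a state after $e_1e_2$ while $A_S$ has none, breaking the bisimulation. The same template handles $DC2$ by appending the continuation $s$. Failure of $DC3$ exhibits a string of the synchronized product language $\mathop{|}\limits_{i=1}^n p_i(s_i)$ generated by the product but absent from $L(A_S)$, contradicting even language equivalence. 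Failure of $DC4$ yields a nondeterministic branching $\delta_i(x,e)=x_1\neq x_2=\delta_i(x,e)$ in some $P_i(A_S)$ whose successors admit different continuations $t$, and I would show the product then reaches, along a single $A_S$-path, states with distinct future languages, so no single $A_S$-state is bisimilar to both.

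For sufficiency I would construct an explicit candidate relation $R$ pairing each reachable product state with the $A_S$-states reached by strings carrying the same projections, and verify the transfer property in both directions using the conditions as follows: $DC3$ guarantees that every string accepted by the product lies in $L(A_S)$, so the product adds no illegal behavior and $R$ is total; $DC1$ and $DC2$ guarantee that whenever the product enables an interleaved order of two events at a matched pair, either the events are co-located and ordered identically or $A_S$ also enables that order, so every product move can be replayed in $A_S$; and $DC4$ guarantees that the local nondeterminism created by the $\varepsilon$-merging of Definition \ref{Natural Projection on Automaton} is language-preserving, so the component ambiguity inside a product state does not split the continuation. I expect the main obstacle to be precisely this last point: because each $P_i(A_S)$ may be nondeterministic, a single reachable product state may correspond to several $A_S$-states, and the delicate step is to show, using $DC4$ together with $DC1$--$DC3$ and the determinism of $A_S$, that all such $A_S$-states share the same outgoing behavior, so that $R$ can be made symmetric and a genuine bisimulation rather than a mere language equivalence.
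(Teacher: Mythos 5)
You should know at the outset that the paper you were given never actually proves this statement: it is imported verbatim as Corollary 1 of the authors' earlier work \cite{TAC2011-n-agents-decomposability} and used here as a black box, so your attempt can only be measured against the architecture that the paper's surrounding remarks describe. Measured that way, your skeleton is the right one: the simulation $A_S \prec \mathop{||}\limits_{i=1}^n P_i\left(A_S\right)$ does hold unconditionally, necessity is naturally proved by contradiction through the language equality that bisimilarity forces, and sufficiency does rest on DC1--DC3 preventing illegal product strings while DC4 supplies the symmetry (this matches the paper's own gloss that DC3 blocks illegal behaviors and DC4 guarantees determinism of the bisimulation quotient). Your necessity arguments for DC1, DC2 and DC3 are essentially sound: since no $E_i$ contains both events, $loc(e_1)\cap loc(e_2)=\emptyset$, the components in $loc(e_2)$ are untouched by $e_1$, and language equality of $A_S$ and the product yields the contradiction.

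The genuine gap is in your necessity argument for DC4. From a witness $\delta_i(x,e)=x_1$, $\delta_i(x,e)=x_2$, $t\in E_i^*$ with $\delta_i(x_1,t)!$ and $\neg\delta_i(x_2,t)!$, you conclude that the product reaches, along one string, two states ``with distinct future languages.'' That does not follow. The two product states $z_1,z_2$ reached by the same string differ only in their $i$-th component, so their continuation languages are $L(z_k)=p_i^{-1}\left(L_i(x_k)\right)\cap C$, $k=1,2$, where $L_i(x_k)=\{u\in E_i^*\mid \delta_i(x_k,u)!\}$ and $C$ is the constraint contributed by the other components. The local difference witnessed by $t$ is invisible whenever no string of $C$ projects onto $t$ --- the other agents may block every shared event occurring in $t$ --- in which case $L(z_1)=L(z_2)$ despite the DC4 violation and your contradiction evaporates. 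A complete proof must either pick the witness states and the reaching string so that the surrounding components provably cooperate in realizing $t$ (using reachability of the equivalence classes and the already-established DC1--DC3), or argue directly with the bisimulation relation rather than with languages. A secondary, smaller inaccuracy sits in your sufficiency sketch: you attribute the inclusion $L\left(\mathop{||}\limits_{i=1}^n P_i\left(A_S\right)\right)\subseteq L(A_S)$ to DC3 alone, but DC3 only governs interleavings of strings in the restricted set $\tilde L\left(A_S\right)$ that synchronize on the same first shared event; arbitrary product strings require DC1 and DC2 as well, and the three conditions must be woven into a single induction rather than invoked separately as you propose.
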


Intuitively, the decomposability condition $DC1$ means that for any
decision on selection between two transitions there should exist at
least one agent that is capable of the decision making, or the
decision should not be important (both permutations in any order be
legal). $DC2$ says that for any decision on the order of two
successive events before any string, either there should exist at
least one agent capable of such decision making, or the decision
should not be important, i.e., any order would be legal for
occurrence of that string. The condition $DC3$ means that the
interleaving of strings from local task automata that synchronize on
the same first appearing shared event, should not allow a string
that is not allowed in the original task automaton. In other words,
$DC3$ is to ensure that an illegal behavior (an string that does not
appear in $A_S$) is not allowed by the team (does not appear in
$\mathop {||}\limits_{i = 1}^n P_i \left( {A_S } \right)$). In this
condition, $\mathop |\limits_{i = 1}^n p_i (s_i )$ is a language and
stands for the interleaving or language product \cite{Willner1991}
of strings $p_i (s_i )$, defined as $\mathop |\limits_{i = 1}^n p_i
(s_i ) = \overset{n}{\underset{i=1}{\cap} } p_i^{-1}(p_i (s_i ))$.
The last condition, $DC4$, deals with the possible nondeterminisms
in $P_i \left( {A_S } \right)$ (Please note that here, $A_S$ is
considered to be deterministic, while $P_i \left( {A_S } \right)$
can be nondeterministic, as it will be explained in Example
\ref{decomposable automaton with nondeterministic projection}).
$DC4$ ensures the determinism of bisimulation quotient of local task
automata, in order to guarantee the symmetry of simulation relations
between $A_S$ and $\mathop {||}\limits_{i = 1}^n P_i \left( {A_S }
\right)$. By providing this symmetry property, $DC4$ guarantees that
a legal behavior (an string in $A_S$) is not disabled by the team
(appears in $\mathop {||}\limits_{i = 1}^n P_i \left( {A_S }
\right)$).

In \cite{TAC2011-n-agents-decomposability} it was also shown that
for a decomposable task automaton, if local controllers exist such
that each local closed loop system (parallel composition of local
plant and local controller automata) satisfies its local task
(bisimulates the corresponding local task automaton), then the
controlled team of the agents will satisfy the global specification,
as it is stated in the following lemma.

\begin{lemma}\label{Top-down-bisimilarity}(Theorem $2$ in
\cite{TAC2011-n-agents-decomposability}): Consider a plant,
represented by a parallel distributed system
$\overset{n}{\underset{i=1}{\parallel} }A_{P_i}$, with given local
event sets $E_i$, $i=1,...,n$, and let the global specification is
given by a deterministic task automaton $A_S$, with
$E=\overset{n}{\underset{i=1}{\cup} }E_i$. If $DC1$-$DC4$ are
satisfied, then designing local controllers $A_{C_i}$, so that
$A_{C_i}\parallel A_{P_i}\cong P_i(A_S)$, $i=1,\cdots, n$, derives
the global closed loop system to satisfy the global specification
$A_S$, i.e., $\overset{n}{\underset{i=1}{\parallel}
}(A_{C_i}\parallel A_{P_i})\cong A_S$.
\end{lemma}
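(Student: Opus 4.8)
The plan is to prove Lemma~\ref{Top-down-bisimilarity} by combining the decomposability of $A_S$ (given by $DC1$--$DC4$ through Lemma~\ref{Task Decomposability Theorem for n agents}) with the local bisimulation hypotheses $A_{C_i}\parallel A_{P_i}\cong P_i(A_S)$, and then invoking compositionality of parallel composition with respect to bisimulation. First I would recall that $DC1$--$DC4$ give us $A_S\cong \mathop{||}\limits_{i=1}^n P_i(A_S)$ by Lemma~\ref{Task Decomposability Theorem for n agents}. Hence it suffices to establish
\begin{equation*}
\overset{n}{\underset{i=1}{\parallel}}\bigl(A_{C_i}\parallel A_{P_i}\bigr)\cong \overset{n}{\underset{i=1}{\parallel}} P_i(A_S),
\end{equation*}
because bisimulation is transitive and symmetric, so chaining this with $\mathop{||}\limits_{i=1}^n P_i(A_S)\cong A_S$ yields the claim.

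The central technical fact I would need is that parallel composition is a congruence for bisimulation: if $B_i\cong B_i'$ for each $i$ over compatible local event sets, then $\overset{n}{\underset{i=1}{\parallel}}B_i\cong \overset{n}{\underset{i=1}{\parallel}}B_i'$. I would prove the two-component case first, $B_1\cong B_1'$ and $B_2\cong B_2'$ imply $B_1\parallel B_2\cong B_1'\parallel B_2'$, by taking the bisimulation relations $R_1\subseteq Q_{B_1}\times Q_{B_1'}$ and $R_2\subseteq Q_{B_2}\times Q_{B_2'}$ and forming the product relation $R=\{((p_1,p_2),(p_1',p_2'))\mid (p_1,p_1')\in R_1,\ (p_2,p_2')\in R_2\}$ on the state spaces of $B_1\parallel B_2$ and $B_1'\parallel B_2'$. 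One then checks that $R$ contains the pair of initial states and that $R$ is a bisimulation, by a case analysis over the transition rules of parallel composition in Definition~\ref{parallel composition}: a shared-event move $e\in E_1\cap E_2$ is matched componentwise using both $R_1$ and $R_2$, while a private move $e\in E_1\setminus E_2$ (respectively $E_2\setminus E_1$) is matched using $R_1$ (respectively $R_2$) with the other component held fixed, and symmetrically in the reverse direction. The general $n$-component statement then follows by induction on $n$ using the associativity of parallel composition quoted in Definition~\ref{parallel composition}, instantiating $B_i = A_{C_i}\parallel A_{P_i}$ and $B_i' = P_i(A_S)$ and feeding in the hypotheses $B_i\cong B_i'$.

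The step I expect to be the main obstacle is verifying that the product relation $R$ is genuinely a \emph{bisimulation} and not merely a simulation, i.e.\ that the matching works symmetrically in both directions; this requires care precisely at the shared-event transitions, where a move in the composite automaton demands simultaneously defined transitions in \emph{both} components, so one must use the defining synchronization clause of Definition~\ref{parallel composition} to argue that the matched target states are again coordinatewise related and that a synchronized move exists on the other side. A subtlety worth flagging is that the $B_i\cong B_i'$ are bisimulations over the \emph{same} local event set $E_i$ (the events of $A_{C_i}\parallel A_{P_i}$ and of $P_i(A_S)$ both lie in $E_i$), so the synchronization alphabets on the two sides of the congruence agree and the case analysis over $E_i\cap E_j$ versus $E_i\setminus E_j$ is identical on both sides; this alignment is what makes the componentwise matching sound. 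Once the congruence lemma is in hand, the remainder is a short chaining of bisimulations and presents no further difficulty.
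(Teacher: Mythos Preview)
Your proposal is correct and follows the standard route. The present paper does not actually prove Lemma~\ref{Top-down-bisimilarity} in-text---it is recalled as Theorem~2 from \cite{TAC2011-n-agents-decomposability}---but the congruence of parallel composition with respect to bisimulation that you develop is precisely the fact the paper later invokes (as ``Lemma~6 in \cite{Automatica2010-2-agents-decomposability}'') in the proof of Theorem~\ref{Top-Down Approach Under Failure-n-agents}, so your argument is in full agreement with the approach the authors rely on.
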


\begin{remark}\label{Remarks for determinism}
It is known that bisimulation implies language equivalence and that
bisimulation of deterministic automata is reduced to their language
equivalence. Now, one question is that whether for a deterministic
task automaton its decomposability in the sense of bisimulation
 (stated in Lemma \ref{Task Decomposability Theorem for n agents})
is reduced to its decomposability in the sense of language
equivalence ($L(A_S) = L(\overset{n}{\underset{i=1}{\parallel} } P_i
\left( A_S \right))$) or its language separability ($L(A_S)
  = \overset{n}{\underset{i=1}{|} } L(P_i \left( A_S \right))$).
Furthermore, it is interesting to know whether the proposed top-down
cooperative control, in Lemma \ref{Top-down-bisimilarity}, is
reduced into a top-down approach in the sense of language
equivalence. As it is illustrated in the Appendix, although in
general, decomposability in the sense of bisimulation implies the
decomposability in the sense of language equivalence, the reverse is
not always true, in spite of determinism of automaton. For the
top-down cooperative control, on the other hand, under the proposed
decomposability conditions, the bisimulation-based approach is
reduced to the language equivalence one, as the deterministic task
automaton can be represented by its langauge.

To elaborate these remarks, we first highlight that the natural
projection may impose emerging properties that do not exist in the
original automaton. For example, local task automata may have some
new strings that do not appear in the original automaton, i.e.,
$A_S$ does not necessarily simulates $P_i \left( A_S \right)$.
Moreover, local task automata may become nondeterministic, even if
the original task automaton is deterministic. The decomposability of
$A_S$, however, concerns with bisimilarity of $A_S$ and
$\overset{n}{\underset{i=1}{\parallel} } P_i \left( A_S \right)$,
that may hold even if $P_i(A_S)\not\prec A_S$, or the local task
automata are nondeterministic for some agents, as it is shown
through examples in the Appendix.
\end{remark}

\section{Problem formulation}\label{Problem formulation}
In the previous section we recalled the conditions for task
automaton decomposability to be used in top-down cooperative
control. A natural follow-up question is that if after such
decomposition, some of the events fail in some agents, then whether
the global task automaton will still remain decomposable with
respect to new set of events. And, if not, what are the conditions
for preserving the decomposability. In order to address this
problem, we first need to investigate the failure on events. In
general, an event $e$ can be either private ($|loc(e)| = 1$) or
shared ($|loc(e)|> 1$). Failure of private events fails the
decomposability as it causes the failure in the whole team of
agents. Failure on a shared event, on the other hand, may or may not
lead to a global failure, depending on whether the failed event is
redundant or not. When an event is a sensor reading; or actuator
command, or it is sent to other agents with no other alternative
links, then the failure on this event stops its global evolutions.
In the following, we will introduce a class of failures that are
investigated in this paper.
\begin{definition}(Event failure)
Consider an automaton $A = (Q, q_0, E, \delta)$. An event $e\in E$
is said to be failed in $A$ (or $E$), if $F(A) = P_{\Sigma}(A) =
P_{E\backslash e}(A) = (Q, q_0, \Sigma = E\backslash e, \delta^F)$,
where, $\Sigma$, $\delta^F$ and $F(A)$ denote the post-failure event
set, post-failure transition relation and post-failure automaton,
respectively. A set $\bar{E}\subseteq E$ of events is then said to
be failed in $A$, when for $\forall e\in \bar{E}$, $e$ is failed in
$A$, i.e., $F(A) = P_{\Sigma}(A_i) = P_{E\backslash \bar{E} }(A) =
(Q, q_0, \Sigma = E\backslash \bar{E}, \delta^F)$.
\end{definition}

Considering a parallel distributed plant $A:=
\overset{n}{\underset{i=1}{||} } A_i = (Z, z_0, E =
\overset{n}{\underset{i=1}{\cup} } E_i, \delta_{||})$ with local
agents $A_i = (Q_i, q_0^i, E_i, \delta_i)$, $i = 1, \ldots, n$.
Failure of $e$ in $E_i$ is said to be passive in $E_i$ (or $A_i$)
with respect to $\overset{n}{\underset{i=1}{||} } A_i$, if $E =
\overset{n}{\underset{i=1}{\cup} } \Sigma_i$. An event whose failure
in $A_i$ is a passive failure is called a passive event in $A_i$.

The notion of passivity, can be interpreted as communication
redundancy as it is stated as follows.
\begin{remark}\label{Meaning of Passivity}
To interpret the passivity more formally, let $snd_e(i)$ and
$rcv_e(i)$ respectively denote the set of labels that $A_i$ sends
$e$ to those agents and the set of labels that $A_i$ receives $e$
from  their agents, defined as $snd_e(i)=\{j\in \{1,...,n\}|A_i
\hbox{ sends } e \hbox{ to } A_j\}$ and $rcv_e(i)=\{j\in
\{1,...,n\}|i\in snd_e(j)\}$. Then, an event $e$ is passive in $A_i$
if $rec_e(i)\neq \emptyset$ (i.e., the $i-th$ agent does not receive
$e$ from its own sensor/actuator readings, but from another agent),
and $\forall k \in snd_e ( i )$: $\exists j \in \{1, \cdots ,n
\}\backslash \{ i, k \},k \in snd_e ( j )$ (i.e., if the $i-th$
agent is a relay for transmission of $e$, for any receiver agent,
there exist another agent to send $e$). In this set-up a passive
failure excludes the failed event $e$ from the corresponding local
event set $E_i$ while it  makes its respective transitions hidden in
$F(A_i)$. Therefore, from definition of parallel composition, the
transitions on other agents can contribute to form the global
transitions in $\overset{n}{\underset{i=1}{||} } F(A_i)$, since only
in this way there will be no synchronization constraint on the rest
of agents in $\overset{n}{\underset{i=1}{||} }F(A_i)$.
\end{remark}

 Moreover, the definition of passivity implies that
 the passivity of failed events is a necessary condition for
 evolution of global transitions after failures,
as it is stated in the following lemma.
\begin{lemma}(Global transitions after local failures) \label{Passivity properties}
Consider a parallel distributed plant $A:=
\overset{n}{\underset{i=1}{||} } A_i = (Z, z_0, E =
\overset{n}{\underset{i=1}{\cup} } E_i, \delta_{||})$ with local
agents $A_i = (Q_i, q_0^i, E_i, \delta_i)$, $i = 1, \ldots, n$. If
no global transitions in $\overset{n}{\underset{i=1}{||} } A_i$ are
disabled in $\overset{n}{\underset{i=1}{||} } F(A_i)$ (i.e.,
$\forall z_1, z_2 \in Z$, $\forall e\in E$, $\delta_{||}(z_1, e) =
z_2$, then $\delta_{||}^F(z_1, e) = z_2$), then all event failures
are passive, i.e., the passivity of local event failures are
necessary for preserving the global transitions.
\end{lemma}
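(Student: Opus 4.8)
The plan is to prove the contrapositive: I will show that if \emph{some} event failure is not passive, then at least one global transition of $\overset{n}{\underset{i=1}{||} } A_i$ is disabled in $\overset{n}{\underset{i=1}{||} } F(A_i)$. This reformulation is natural because passivity is defined through the covering condition $E = \overset{n}{\underset{i=1}{\cup} } \Sigma_i$, so negating it gives a concrete object to work with, namely an event that is lost from the global event set.

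First I would unpack the definition of passivity applied to the whole team. By the event-failure definition, when $\bar{E}_i\subseteq E_i$ fails in $A_i$ we have $\Sigma_i = E_i\backslash \bar{E}_i$, and the passivity of these failures \emph{with respect to} $\overset{n}{\underset{i=1}{||} } A_i$ is exactly the requirement $E = \overset{n}{\underset{i=1}{\cup} } \Sigma_i$. Hence if the failures are \emph{not} all passive, the union of post-failure local event sets is strictly smaller than $E$, so there exists an event $e\in E$ with $e\notin \Sigma_i$ for every $i$; equivalently, $e$ has failed in \emph{every} agent whose local event set contains it, i.e.\ $e$ fails across all of $loc(e)$.

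Next I would exhibit a disabled global transition witnessing this event. Since $A = \overset{n}{\underset{i=1}{||} } A_i$ is taken accessible ($A = Ac(A)$), and since I may assume $e$ actually labels some transition in the plant (otherwise $e$ is behaviorally irrelevant and can be dropped from $E$), there exist reachable global states $z_1, z_2\in Z$ with $\delta_{||}(z_1, e) = z_2$. The key step is to trace what happens to this transition under the post-failure projections $F(A_i) = P_{\Sigma_i}(A_i)$. For each $i\in loc(e)$, the event $e$ has become a hidden ($\varepsilon$) move in $F(A_i)$ because $e\notin\Sigma_i$; for $i\notin loc(e)$, the event $e$ was never in $E_i$ to begin with. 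Therefore $e\notin \overset{n}{\underset{i=1}{\cup} } \Sigma_i$, so $e$ is not in the event set of $\overset{n}{\underset{i=1}{||} } F(A_i)$ at all, and by the definition of parallel composition no $e$-labeled transition can exist in $\overset{n}{\underset{i=1}{||} } F(A_i)$. Thus $\delta_{||}^F(z_1, e)$ is undefined while $\delta_{||}(z_1, e) = z_2$, which is precisely a disabled global transition. This contradicts the hypothesis, completing the contrapositive.

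The main obstacle I anticipate is bookkeeping the interaction between the \emph{hiding} of $e$ (its replacement by $\varepsilon$-moves under $P_{\Sigma_i}$) and the state-merging that natural projection performs, so that the statement ``$e$ is absent from $\overset{n}{\underset{i=1}{||} } F(A_i)$'' is argued cleanly rather than asserted. Concretely, I must confirm that after $e$ is hidden in \emph{all} of its locations, there is no residual synchronization path that could reintroduce an $e$-transition — which follows because the alphabet of the composition is $\overset{n}{\underset{i=1}{\cup} }\Sigma_i$ and $e$ lies outside it, but stating this carefully in terms of $\delta^F$ and $\delta_{||}^F$ is where the care is needed. A secondary subtlety is the harmless assumption that the witnessing $e$ genuinely occurs on a reachable transition; I would note that events with no transitions contribute nothing to either $\delta_{||}$ or $\delta_{||}^F$, so they never obstruct the equality of transition relations and may be excluded without loss of generality.
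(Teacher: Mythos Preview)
Your contrapositive argument is correct. The paper likewise argues the contrapositive, but it routes the reasoning through the communication-based interpretation of passivity given in Remark~\ref{Meaning of Passivity} (the $snd_e$/$rcv_e$ structure): it reasons that a failed event $a$ in $E_i$ must be received from another agent rather than produced locally, since otherwise no agent can supply $a$ and the global transitions on $a$ are blocked; it then notes that such a received event is excluded from $\Sigma_i$ and its transitions are replaced by $\varepsilon$-moves so that downstream transitions are not cut off. Your version bypasses this semantic layer entirely and works directly from the formal passivity condition $E = \overset{n}{\underset{i=1}{\cup}} \Sigma_i$: negating it yields an event $e$ absent from every $\Sigma_i$, hence absent from the alphabet of $\overset{n}{\underset{i=1}{||}} F(A_i)$, so any $e$-labeled global transition is automatically disabled. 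Your route is shorter and cleaner at the level of definitions; the paper's route trades some rigor for physical intuition about \emph{which} kinds of failures are non-passive and why they disable transitions.
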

\begin{proof}
See the proof in the Appendix.
\end{proof}

The problem of decomposability under event failures is now defined
as follows.
\begin{problem}(Decomposability under event failures)
\label{Decomposability under event failure} Let a deterministic task
automaton $A_S  = (Q, q_0, E = \mathop{\cup}\limits_{i = 1}^n E_i,
\delta)$ is decomposable with respect to parallel composition and
natural projections $P_i$, $i=1, \ldots, n$. Then, does the global
task automaton $A_S$ remain decomposable in spite of failure of
events $\{a_{i,r}\}$, $r\in\{1,...,n_i\}$ in local event sets $E_i$,
$i\in\{1, \ldots, n\}$? i.e., if $A_S \cong
\overset{n}{\underset{i=1}{\parallel} }P_i (A_S)$, then does $A_S
\cong \overset{n}{\underset{i=1}{\parallel} }F(P_i(A_S))$ always
hold true?, and if not, what are the conditions for such
decomposability?
\end{problem}

The next interesting question is the cooperative control under event
failure, defined as \begin{problem}(Cooperative tasking under event
failure) \label{Cooperative tasking under event failure-problem}
Consider a concurrent plant $A_P:= \mathop {||}\limits_{i = 1}^n
A_{P_i}$ and a decomposable deterministic task automaton $A_S = (Q,
q_0, E = \mathop {\cup}\limits_{i = 1}^n E_i, \delta)\cong \mathop
{||}\limits_{i = 1}^n P_i (A_S )$, and suppose that local controller
automata $A_{C_i}$, $i = 1, \ldots, n$ exist such that each local
closed loop system satisfies its corresponding local task, i.e.,
$A_{C_i}||A_{P_i}\cong P_i(A_S)$, $i = 1, \ldots, n$. Assume
furthermore that $\bar{E}_i = \{a_{i,r}\}$ fail in $E_i$,
$r\in\{1,...,n_i\}$. Then, does the team still can fulfill the
global task, in spite of failures, without redesigning the
controller automata, i.e., $\mathop {||}\limits_{i = 1}^n
F(A_{P_i}||A_{C_i})\cong A_S$?, and if not, what are the conditions
to preserve the satisfaction of the global specification?
\end{problem}

 These problems will be addressed in
the following two sections.

\section{Task Decomposability under event failures}\label{Task Decomposability under event failure}
According to definition of passivity, for any local event set $E_i$,
$\Sigma_i$ excludes any passive failed events $e$ from $E_i$ , while
%keeps those events that are non-passive in it, and moreover,
 the effect of this failure on
$P_i(A_S)$ is defined as the projection of $A_S$ into $E_i\backslash
e$ (instead of $E_i$), leading to $P_{E_i\backslash e}(A_S)$.
 %This
%means that the transitions on passive failed events in $P_i(A_S)$
%are captured by replacing the corresponding transitions in
%$F(P_i(A_S))$ with $\varepsilon$ and merging the
%$\varepsilon-related$ states.
%Whereas, if the failed event $e$ is non-passive in $E_i$, then its
%corresponding transitions stop in $F(P_i(A_S))$.

In this set up evolution of global transitions in
$\mathop{||}\limits_{i = 1}^n F(P_i \left( {A_S } \right))$ relies
on the passivity of failed events,
%and disabling global transitions on
%non-passive failed events,
 as it is expected and stated in Lemma
\ref{Passivity properties}. The reason is that due to definition of
parallel composition, evolution of global transitions requires the
failures to be passive, since passive failed events are excluded
from the corresponding local event set and the local task automaton
is projected to the rest of events. For non-passive failed events,
on the other hand, since they are not received from other agents,
and hence are not excluded from the local event set, but their
transitions are stopped, then due to synchronization restriction in
definition of parallel composition, the global transitions cannot
evolve on them.

Consequently, as highlighted in Lemma \ref{Passivity properties},
passivity of failed events is a necessary condition for the task
automaton to remain decomposable after the failure.
%The reason is
%that any failure on any non-passive event disables global
%transitions defined on those events.

Moreover, when all failed events are passive, due to definition of
passivity, Problem \ref{Decomposability under event failure} can be
transformed into the standard decomposition problem to find the
conditions under which $A_S \cong \mathop {||}\limits_{i = 1}^n
P_{E_i\backslash \bar{E}_i} (A_S )$. Accordingly, the conditions on
the global task automaton to preserve the decomposability under
event failures, are reduced into their respective decomposability
conditions in Lemma \ref{Task Decomposability Theorem for n agents},
as the following lemmas.

\begin{lemma}\label{EF1,2 and DC1,2}
Consider a deterministic task automaton $A_S = (Q, q_0, E = \mathop
{\cup}\limits_{i = 1}^n E_i, \delta)$. Assume that $A_S$ is
decomposable, i.e., $A_S \cong \mathop {||}\limits_{i = 1}^n P_i
(A_S )$, and suppose that $\bar{E}_i = \{a_{i,r}\}$ fail in $E_i$,
$r\in\{1,...,n_i\}$,  and $\bar{E}_i$ are passive for $i\in\{1,
\ldots, n\}$. Then, following two expressions are equivalent:
\begin{enumerate}
\item
\begin{itemize}
\item $EF1$: $\forall e_1,
e_2 \in E, q\in Q$: $[\delta(q, e_1)!\wedge \delta(q, e_2)!]\\
\Rightarrow [\exists E_i\in\{E_1, \ldots, E_n\}, \{e_1,
e_2\}\subseteq E_i\backslash \bar{E}_i]\vee[\delta(q, e_1e_2)!
\wedge \delta(q, e_2e_1)!]$;
\item $EF2$: $\forall e_1, e_2 \in E,  q\in Q$, $s\in E^*$: $[\delta(q,
e_1e_2s)!\vee \delta(q, e_2e_1s)!]\\ \Rightarrow [\exists
E_i\in\{E_1, \ldots, E_n\}, \{e_1, e_2\}\subseteq E_i\backslash
\bar{E}_i]\vee [ \delta(q, e_1e_2s)!\wedge \delta(q, e_2e_1s)!]$.
\end{itemize}
\item \begin{itemize}
\item $DC1_{\Sigma}$: $\forall e_1,
e_2 \in E, q\in Q$: $[\delta(q, e_1)!\wedge \delta(q, e_2)!]\\
\Rightarrow [\exists \Sigma_i\in\{\Sigma_1, \ldots, \Sigma_n\},
\{e_1, e_2\}\subseteq \Sigma_i]\vee[\delta(q, e_1e_2)! \wedge
\delta(q, e_2e_1)!]$;
\item $DC2_{\Sigma}$: $\forall e_1, e_2 \in E,  q\in Q$, $s\in E^*$: $[\delta(q,
e_1e_2s)!\vee \delta(q, e_2e_1s)!]\\ \Rightarrow [\exists
\Sigma_i\in\{\Sigma_1, \ldots, \Sigma_n\}, \{e_1, e_2\}\subseteq
\Sigma_i]\vee [\delta(q, e_1e_2s)!\wedge \delta(q, e_2e_1s)!]$.
\end{itemize}
\end{enumerate}
\end{lemma}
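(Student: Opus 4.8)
The plan is to reduce the asserted equivalence to a single set identity, after which the two families of conditions coincide verbatim.

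First I would unwind the definition of event failure as applied to the failed set $\bar{E}_i$ in each local alphabet $E_i$. By that definition the failure operator $F$ acts on the alphabet exactly as the natural projection that erases the failed events, so the post-failure local event set is $\Sigma_i = E_i\backslash \bar{E}_i$ for every $i\in\{1,\ldots,n\}$. This is a purely local identity and does not itself require passivity: the passivity hypothesis is what guarantees $E=\overset{n}{\underset{i=1}{\cup}}\Sigma_i$ and thereby places us in the regime where the fault-tolerant decomposition reduces to the standard decomposition with local sets $\Sigma_i$, but it plays no role in the present bracket-to-bracket equivalence.

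Second, I would substitute $\Sigma_i = E_i\backslash \bar{E}_i$ into the two condition lists and compare them term by term. The hypotheses and the transition predicates $\delta(q,e_1)!$, $\delta(q,e_1e_2)!$, $\delta(q,e_1e_2s)!$, and so on are all stated over the \emph{unchanged} automaton $A_S$, so they are identical in $EF1/EF2$ and in $DC1_\Sigma/DC2_\Sigma$. The only apparent difference is that $EF1$ and $EF2$ carry the disjunct ``$\exists E_i,\ \{e_1,e_2\}\subseteq E_i\backslash\bar{E}_i$'' whereas $DC1_\Sigma$ and $DC2_\Sigma$ carry ``$\exists \Sigma_i,\ \{e_1,e_2\}\subseteq \Sigma_i$''. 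By the identity established in the first step these two disjuncts denote the same proposition, so $EF1\Leftrightarrow DC1_\Sigma$ and $EF2\Leftrightarrow DC2_\Sigma$, and hence the two expressions are equivalent.

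The main thing to get right is the bookkeeping in the first step, confirming that under the passive-failure convention the local alphabet of $F(P_i(A_S))$ is exactly $E_i\backslash\bar{E}_i$ rather than something that must be recomputed through the projection and $\varepsilon$-merging. Once that is pinned down, there is no genuine obstacle: the two lists differ only in the name given to the same post-failure local event set, so the equivalence is immediate and carries no logical content beyond this substitution.
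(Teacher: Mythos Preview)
Your proposal is correct and follows essentially the same route as the paper: both arguments reduce to the identity $\Sigma_i = E_i\backslash\bar{E}_i$ and then observe that the two lists of conditions are syntactically identical after substitution. The only cosmetic difference is that the paper attributes the identity $\Sigma_i = E_i\backslash\bar{E}_i$ to the passivity hypothesis, whereas you (more accurately, given the formal definition of event failure) trace it to the failure definition itself and reserve passivity for the global-alphabet preservation $E=\bigcup_i\Sigma_i$; this does not affect the argument.
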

\begin{proof}
See the proof in the Appendix.
\end{proof}

Lemma $\ref{EF1,2 and DC1,2}$ gives the simplified versions of $DC1$
and $DC2$ after event failures, with respect to refined local event
sets. Adopting the same $DC3$ for the refined local event sets, it
remains to represent a simplified version of $DC4$ for the local
task automata, after event failures. This condition is stated in the
following lemma.
 \begin{lemma}\label{EF4 and DC4}
Consider a deterministic task automaton $A_S = (Q, q_0, E = \mathop
{\cup}\limits_{i = 1}^n E_i, \delta)$. Assume that $A_S$ is
decomposable, i.e., $A_S \cong \mathop {||}\limits_{i = 1}^n P_i
(A_S )$, and suppose that $\bar{E}_i = \{a_{i,r}\}$ fail in $E_i$,
$r\in\{1,...,n_i\}$,  and $\bar{E}_i$ are passive for $i\in\{1,
\ldots, n\}$. Then, following two expressions are equivalent:
\begin{itemize}
\item $EF4$: $\forall i\in\{1, \ldots, n\}$, $x, x_1, x_2
\in Q_i$, $x_1\neq x_2$, $e\in E_i\backslash \bar{E}_i$, $t_1 \in
\bar{E}_i^*$, $t\in E_i^*$, $\delta_i (x, t_1e)= x_1$, $\delta_i (x,
e)= x_2$: $\delta_i (x_1, t)! \Leftrightarrow \delta_i(x_2, t)!$.
\item $DC4_{\Sigma}$: $\forall i\in\{1, \ldots, n\}$, $x, x_1,
x_2 \in Q_i$, $x_1\neq x_2$, $e\in \Sigma_i$, $t\in \Sigma_i^*$,
$\delta_i^F (x, e)= x_1$, $\delta_i^F (x, e)= x_2$: $\delta_i^F
(x_1, t)! \Leftrightarrow \delta_i^F(x_2, t)!$. Where, $\delta_i^F$
is the transition relation in $F(P_i(A_S))$.
\end{itemize}
\end{lemma}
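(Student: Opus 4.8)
The plan is to exploit the identity $F(P_i(A_S)) = P_{\Sigma_i}(A_S)$, which follows from the failure definition together with the transitivity of natural projection (since $\Sigma_i \subseteq E_i \subseteq E$): passing from $P_i(A_S)$ to $F(P_i(A_S))$ amounts to hiding exactly the failed events $\bar{E}_i = E_i \setminus \Sigma_i$ and then merging the resulting $\varepsilon$-related states, i.e. quotienting the state set $Q_i$ of $P_i(A_S)$ by $\sim_{\Sigma_i}$ (Definitions \ref{Equivalent class of states} and \ref{Natural Projection on Automaton}). Since inside $P_i(A_S)$ the only events not lying in $\Sigma_i$ are those in $\bar{E}_i$, two states of $P_i(A_S)$ are $\sim_{\Sigma_i}$-related precisely when they are linked by a chain of $\bar{E}_i$-transitions. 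The transition rule of $F(P_i(A_S))$ then reads $\delta_i^F([x]_{\Sigma_i}, e) = [x']_{\Sigma_i}$ iff some $\bar{E}_i$-prefix followed by $e$ carries a representative of $[x]_{\Sigma_i}$ to a representative of $[x']_{\Sigma_i}$ in $P_i(A_S)$. This is the bridge that lets me put every nondeterministic $e$-branching of $F(P_i(A_S))$ into the canonical shape appearing in $EF4$.

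First I would prove $EF4 \Rightarrow DC4_\Sigma$. Take a branching $\delta_i^F(x, e) = x_1$, $\delta_i^F(x, e) = x_2$ with $x_1 \neq x_2$ and $e \in \Sigma_i$ in $F(P_i(A_S))$. Unwinding through the transition rule above and using the determinism of $A_S$ (which makes the ``direct'' $e$-successor of a chosen representative unique), I would organize the $\bar{E}_i$-chain connecting the two representatives into a single source state together with two $P_i(A_S)$-paths from it, one of the form $t_1 e$ with $t_1 \in \bar{E}_i^*$ and one of the form $e$, reaching representatives of the two distinct classes $x_1$, $x_2$. The case $t_1 = \varepsilon$ recovers an ordinary $DC4$-branching on an event of $\Sigma_i$, so $EF4$ (which ranges over all $t_1 \in \bar{E}_i^*$, including $\varepsilon$) already subsumes the nondeterminism inherited from $P_i(A_S)$. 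Applying $EF4$ yields $\delta_i(x_1', t)! \Leftrightarrow \delta_i(x_2', t)!$ for the representatives and all $t \in E_i^*$; projecting these continuation strings onto $\Sigma_i$ and re-merging by $\sim_{\Sigma_i}$ transfers the equivalence to $\delta_i^F(x_1, t)! \Leftrightarrow \delta_i^F(x_2, t)!$ for all $t \in \Sigma_i^*$, which is $DC4_\Sigma$.

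For the converse $DC4_\Sigma \Rightarrow EF4$, I would start from a pattern $\delta_i(x, t_1 e) = x_1$, $\delta_i(x, e) = x_2$, $x_1 \neq x_2$, with $t_1 \in \bar{E}_i^*$ and $e \in \Sigma_i$. Because $t_1$ consists only of failed events, $x$ and the state at the start of the first $e$-branch collapse into one $\sim_{\Sigma_i}$-class, so $[x_1]_{\Sigma_i}$ and $[x_2]_{\Sigma_i}$ are both $e$-successors of $[x]_{\Sigma_i}$ in $F(P_i(A_S))$. If $[x_1]_{\Sigma_i} = [x_2]_{\Sigma_i}$ the required equivalence $\delta_i(x_1, t)! \Leftrightarrow \delta_i(x_2, t)!$ holds once continuations are read modulo $\sim_{\Sigma_i}$; otherwise this is a genuine nondeterministic branching to which $DC4_\Sigma$ applies, giving matching continuations over $\Sigma_i^*$. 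Lifting back to $E_i^*$ by decomposing any $t \in E_i^*$ into its $\bar{E}_i$-part (absorbed by $\sim_{\Sigma_i}$) and its $\Sigma_i$-projection recovers $\delta_i(x_1, t)! \Leftrightarrow \delta_i(x_2, t)!$, i.e. $EF4$.

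The hard part will be the bookkeeping between the two quotients $\sim_{E_i}$ and $\sim_{\Sigma_i}$ when translating the continuation condition back and forth across $E_i^*$ and $\Sigma_i^*$: I must check that hiding $\bar{E}_i$ neither creates a defined continuation that was previously blocked nor blocks one that existed, so that ``$\delta_i(\cdot, t)!$ over $E_i^*$'' and ``$\delta_i^F(\cdot, t)!$ over $\Sigma_i^*$'' genuinely correspond, and that the reduction of an arbitrary $F(P_i(A_S))$-branching to the $t_1 e$ versus $e$ form is lossless. This is where passivity of $\bar{E}_i$ is essential: it guarantees, via Lemma \ref{Passivity properties} and the reduction of the post-failure projection to $P_{\Sigma_i}(A_S)$, that the failed transitions only become hidden rather than truly deleted, so the $\varepsilon$-closure faithfully records every surviving continuation and the canonical reduction goes through.
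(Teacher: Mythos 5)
Your proposal follows essentially the same route as the paper's own proof: it identifies $F(P_i(A_S))$ with the $\sim_{\Sigma_i}$-quotient of $P_i(A_S)$, splits the nondeterminism of $F(P_i(A_S))$ into the part inherited from $P_i(A_S)$ (your $t_1 = \varepsilon$ case, covered by $DC4$ and decomposability) and the part newly created by hiding $\bar{E}_i$-transitions (the $t_1 \neq \varepsilon$ case), and translates continuations between $E_i^*$ and $\Sigma_i^*$ via $p_{\Sigma_i}$ and the merging of $\varepsilon$-related states. The bookkeeping you flag as the hard part --- that every branching of $F(P_i(A_S))$ reduces losslessly to the $t_1 e$-versus-$e$ pattern --- is precisely the step the paper itself asserts without further justification, so your attempt matches the published argument in both structure and level of rigor.
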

\begin{proof}
See the proof in the Appendix.
\end{proof}

\begin{remark} $EF4$ is the counterpart of $DC4$ after the event failures,
that handle newly possible nondeterminism in the local task
automata. Any nondeterminism that is propagated from the local task
automata of before the failure, is treated by $DC4$ when $A_S$ is
decomposable.
\end{remark}

Now, combination of Lemmas \ref{Task Decomposability Theorem for n
agents}, \ref{EF1,2 and DC1,2}, and \ref{EF4 and DC4} leads to the
main result on decomposability under event failures as the following
theorem.

\begin{theorem}\label{Decomposability under
event failure-Theorem} Consider a deterministic task automaton $A_S
= (Q, q_0, E = \mathop {\cup}\limits_{i = 1}^n E_i, \delta)$. Assume
that $A_S$ is decomposable, i.e., $A_S \cong \mathop {||}\limits_{i
= 1}^n P_i (A_S )$, and furthermore, assume that $\bar{E}_i =
\{a_{i,r}\}$ fail in $E_i$, $r\in\{1,...,n_i\}$, and $\bar{E}_i$ are
passive for $i\in\{1, \ldots, n\}$. Then, $A_S$ remains
decomposable, in spite of event failures, i.e., $A_S \cong \mathop
{||}\limits_{i = 1}^n F(P_i \left( {A_S } \right))$ if and only if
\begin{itemize}
\item $EF1$: $\forall e_1,
e_2 \in E, q\in Q$: $[\delta(q,e_1)!\wedge \delta(q,e_2)!]\\
\Rightarrow [\exists E_i\in\{E_1, \cdots, E_n\}, \{e_1,
e_2\}\subseteq E_i\backslash \bar{E}_i]\vee[\delta(q, e_1e_2)!
\wedge \delta(q, e_2e_1)!]$;
\item $EF2$: $\forall e_1, e_2 \in E,  q\in Q$, $s\in E^*$: $[\delta(q,
e_1e_2s)!\vee \delta(q, e_2e_1s)!]\\ \Rightarrow [\exists
E_i\in\{E_1, \cdots, E_n\}, \{e_1, e_2\}\subseteq E_i\backslash
\bar{E}_i]\vee [ \delta(q, e_1e_2s)!\wedge \delta(q, e_2e_1s)!]$;
\item $EF3$: $\delta(q_0, \mathop |\limits_{i = 1}^n p_i \left( {s_i }
\right))!$, $\forall \{s_1, \cdots, s_n\}\in \tilde L\left( {A_S }
\right)$, $\exists s_i, s_j \in \{s_1, \cdots, s_n\}, s_i \neq s_j$,
where $\tilde L\left( {A_S } \right) \subseteq L\left( {A_S }
\right)$ is the largest subset of $L\left( {A_S } \right)$ such that
$\forall s\in \tilde L\left( {A_S } \right), \exists s^{\prime} \in
\tilde L\left( {A_S } \right),\;\exists \Sigma_i, \Sigma_j  \in
\left\{ {\Sigma_1 ,...,\Sigma_n } \right\},
 i \ne j,p_{\Sigma_i  \cap \Sigma_j } \left( s \right)$ and
 $p_{\Sigma_i  \cap \Sigma_j } \left( s^{\prime} \right)$  start with the same
 event, and
\item $EF4$: $\forall i\in\{1, \ldots, n\}$, $x, x_1, x_2
\in Q_i$, $x_1\neq x_2$, $e\in E_i\backslash \bar{E}_i$, $t_1 \in
\bar{E}_i^*$, $t\in E_i^*$, $\delta_i (x, t_1e)= x_1$, $\delta_i (x,
e)= x_2$: $\delta_i (x_1, t)! \Leftrightarrow \delta_i(x_2, t)!$.
\end{itemize}
\end{theorem}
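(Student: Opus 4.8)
The plan is to recognize that, under the passivity hypothesis, decomposability after failure is exactly ordinary decomposability taken with respect to the \emph{refined} local event sets $\Sigma_i := E_i \backslash \bar{E}_i$, and then to assemble the three lemmas already proved so as to translate the decomposability conditions for these refined sets into $EF1$--$EF4$. Accordingly the theorem will not require a fresh necessity/sufficiency argument; it is a composition of Lemmas~\ref{Task Decomposability Theorem for n agents}, \ref{EF1,2 and DC1,2} and \ref{EF4 and DC4}.

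First I would record the structural role of passivity. Since every $\bar{E}_i$ is passive, the definition of passive failure gives $E = \bigcup_{i=1}^n \Sigma_i$: the refined local event sets still cover the whole global event set. This is precisely the hypothesis that lets Lemma~\ref{Task Decomposability Theorem for n agents} be applied verbatim with $\{\Sigma_i\}$ in the role of the local event sets, and by Lemma~\ref{Passivity properties} passivity is in any case necessary for the global transitions to survive the failure, so nothing is lost by assuming it.

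Next I would establish the reduction $F(P_i(A_S)) = P_{\Sigma_i}(A_S)$. As set out at the beginning of this section, a passive failure of $\bar{E}_i$ excludes those events from $E_i$ and renders the corresponding transitions hidden, so its effect on $P_i(A_S) = P_{E_i}(A_S)$ is to project $A_S$ onto $E_i \backslash \bar{E}_i = \Sigma_i$; equivalently, by transitivity of natural projection ($P_{\Sigma_i} = P_{\Sigma_i}\circ P_{E_i}$ whenever $\Sigma_i \subseteq E_i$) this composite projection is just $P_{\Sigma_i}(A_S)$. Hence $A_S \cong \mathop{||}\limits_{i=1}^n F(P_i(A_S))$ holds if and only if $A_S \cong \mathop{||}\limits_{i=1}^n P_{\Sigma_i}(A_S)$, i.e.\ if and only if $A_S$ is decomposable with respect to the refined sets $\Sigma_i$. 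Applying Lemma~\ref{Task Decomposability Theorem for n agents} to the deterministic automaton $A_S$ with local event sets $\Sigma_i$ (legitimate because $E = \bigcup_i \Sigma_i$) then gives $A_S \cong \mathop{||}\limits_{i=1}^n P_{\Sigma_i}(A_S)$ if and only if $DC1_\Sigma \wedge DC2_\Sigma \wedge DC3_\Sigma \wedge DC4_\Sigma$. Finally I would invoke Lemma~\ref{EF1,2 and DC1,2} for $\{EF1,EF2\} \Leftrightarrow \{DC1_\Sigma, DC2_\Sigma\}$, Lemma~\ref{EF4 and DC4} for $EF4 \Leftrightarrow DC4_\Sigma$, and note that $EF3$ is by construction $DC3_\Sigma$ (the condition $DC3$ written over the intersections $\Sigma_i \cap \Sigma_j$); chaining these equivalences closes the proof.

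I expect the only genuinely delicate point to be the reduction $F(P_i(A_S)) = P_{\Sigma_i}(A_S)$ together with the accompanying identity $E = \bigcup_i \Sigma_i$, because it is exactly here that passivity is indispensable: without it a failed event could disappear from every $\Sigma_i$, the hypothesis of Lemma~\ref{Task Decomposability Theorem for n agents} would fail, and (by Lemma~\ref{Passivity properties}) global transitions would already be destroyed. Once this reduction is secured, everything else is a mechanical substitution of $\Sigma_i$ for $E_i$ and an application of the three lemmas, so no separate per-condition calculation is needed.
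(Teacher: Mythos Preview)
Your proposal is correct and follows essentially the same route as the paper: use passivity to identify $\mathop{||}_{i=1}^n F(P_i(A_S))$ with $\mathop{||}_{i=1}^n P_{\Sigma_i}(A_S)$, invoke Lemma~\ref{Task Decomposability Theorem for n agents} over the refined event sets (legitimate since $E=\bigcup_i\Sigma_i$), and then translate $DC1_\Sigma$--$DC4_\Sigma$ into $EF1$--$EF4$ via Lemmas~\ref{EF1,2 and DC1,2} and~\ref{EF4 and DC4}, with $EF3$ being $DC3_\Sigma$ verbatim. Your write-up is in fact more explicit than the paper's about why $E=\bigcup_i\Sigma_i$ is needed and about the projection identity $P_{\Sigma_i}=P_{\Sigma_i}\circ P_{E_i}$, but the architecture is identical.
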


\begin{proof}
First, according to Lemma \ref{Passivity properties}, passivity of
$\bar{E}_i$  is a necessary condition for preserving the
decomposability. Now, providing the decomposability of $A_S$ and
passivity of all failed events, due to  definition of passivity, it
leads to $\mathop {||}\limits_{i = 1}^n F(P_i \left( {A_S }
\right))\cong \mathop {||}\limits_{i = 1}^n P_{\Sigma_i} \left( {A_S
} \right) = \mathop {||}\limits_{i = 1}^n P_{E_i\backslash
\bar{E}_i} \left( {A_S } \right)$ that based on Lemmas \ref{Task
Decomposability Theorem for n agents}, \ref{EF1,2 and DC1,2} and
\ref{EF4 and DC4}, it is bisimiar to $A_S$ if and only if $EF1$ -
$EF4$ hold true for the refined local event sets $\{\Sigma_1, \dots,
\Sigma_n\}$.
\end{proof}

\begin{remark}\label{meaning of EF}
$EF1$-$EF4$ are respectively the decomposability conditions
$DC1$-$DC4$, after event failures with respect to parallel
composition and natural projections into refined local event sets
$\Sigma_i = E_i \backslash \bar{E}_i$, $i\in\{1, \ldots, n\}$,
provided the passivity of $\bar{E}_i$, $i\in\{1, \ldots, n\}$.
Condition $EF1$ means that, after failure of some passive events,
for any decision on selection between two transitions there should
exist at least one agent that is capable of the decision making, or
the decision should not be important (both permutations in any order
be legal). $EF2$ says that, after failure of some passive events,
for any decision on the order of two successive events before any
string, either there should exist at least one agent capable of such
decision making, or the decision should not be important, i.e., any
order would be legal for occurrence of that string. The condition
$EF3$ means that, after failure of some passive events, any
interleaving of strings from local task automata that have the same
first appearing shared event, should not allow a string that is not
allowed in the original task automaton. In other words, $EF3$ is to
ensure that, after failure of some passive events, an illegal
behavior (an string that does not appear in $A_S$) is not allowed by
the team (does not appear in $\mathop {||}\limits_{i = 1}^n F(P_i
\left( {A_S }) \right)$). The last condition, $EF4$, ensures the
determinism of bisimulation quotient of local task automaton, in
order to guarantee the symmetry of simulation relations between
$A_S$ and $\mathop {||}\limits_{i = 1}^n F(P_i ( A_S ))$. By
providing this symmetry property, $EF4$ guarantees that, after the
failures, a legal behavior (a string in $A_S$) is not disabled by
the team (appears in $\mathop {||}\limits_{i = 1}^n F(P_i ( A_S )$).
\end{remark}

Following examples illustrate these conditions.
\begin{example}
This example illustrates the notion of passivity and shows a
decomposable automaton that stays decomposable, when an even is
failed passively in one of the local agents and $EF1$-$EF4$ are
satisfied. Consider the automaton $A_S$: \xymatrix@R=0.1cm{
                \ar[r]&\bullet \ar[dr]_{e_2} \ar[r]^{e_1}&\bullet  \ar[r]^{e_2}& \bullet \ar[r]^{a}&\bullet \\
             && \bullet  \ar[r]^{e_1}\ar[dr]_{a} &\bullet \ar[r]^{a}&\bullet \\
             &&& \bullet  \ar[r]^{e_1}&\bullet
             } with
             local event sets $E_1=\{e_1, a\}$ and $E_2=\{e_2, a\}$,
             $E_3=\{a\}$ and communication pattern as
             $\{1, 2\}\in snd_a(3)$, and no other communication links. This automaton is
             decomposable, as the parallel composition of
             $P_1(A_S) \cong \xymatrix@R=0.1cm{
                \ar[r]&\bullet \ar[r]^{e_1}\ar[dr]_{a}&\bullet   \ar[r]^{a}&\bullet \\
             && \bullet  \ar[r]^{e_1}&\bullet }$, $P_2(A_S)\cong \xymatrix@R=0.1cm{
                \ar[r]&\bullet \ar[r]^{e_2} &\bullet \ar[r]^{a}&\bullet}$  and $P_3(A_S)\cong \xymatrix@R=0.1cm{
                \ar[r]&\bullet \ar[r]^{a}&\bullet}$ is $\mathop
{||}\limits_{i = 1}^3 P_i ( A_S )$:\\ \xymatrix@R=0.5cm{
                \ar[r]&\bullet \ar[r]^{e_2}\ar[d]_{e_1}&\bullet   \ar[r]^{a}\ar[d]_{e_1}&\bullet\ar[r]^{e_1} &\bullet\\
             &\bullet \ar[r]^{e_2}&\bullet   \ar[r]^{a}&\bullet
             }
             which is bisimilar to $A_S$.
                Now, assume that $a$ fails in $E_1$. Then
                $EF1$-$EF4$ are satisfied (as $\delta(q, e_2e_1a)! \wedge \delta(q,
e_2ae_1)!$, and hance, $EF1$ and $EF2$ hold true; after the failure,
the interleavings on shared event $a$ impose no illegal strings, and
therefore, $EF3$ is satisfied, and finally $EF4$ is fulfilled since
$F(P_1(A_S))\cong \xymatrix@R=0.1cm{
                \ar[r]&\bullet \ar[r]^{e_1}&\bullet}$,
                $F(P_2(A_S))\cong
\xymatrix@R=0.1cm{
                \ar[r]&\bullet \ar[r]^{e_2} &\bullet
                \ar[r]^{a}&\bullet}$ and $F(P_3(A_S))\\ \cong \xymatrix@R=0.1cm{
                \ar[r]&\bullet \ar[r]^{a}&\bullet}$ are all deterministic), and hence, the parallel composition of
$F(P_1(A_S))$ with $\Sigma_1 = \{e_1\}$, $F(P_2(A_S))$ with
$\Sigma_2 = \{e_2, a\}$, and $F(P_3(A_S))$ with $\Sigma_3 = \{a\}$,
is $\mathop {||}\limits_{i = 1}^3 F(P_i ( A_S ))$:
\xymatrix@R=0.5cm{
                \ar[r]&\bullet \ar[r]^{e_2}\ar[d]_{e_1}&\bullet   \ar[r]^{a}\ar[d]_{e_1}&\bullet\ar[d]_{e_1} \\
             &\bullet \ar[r]^{e_2}&\bullet   \ar[r]^{a}&\bullet
             } that is bisimilar to $A_S$.
      However, if $a$ was failed in $E_3$, then it evolved in none of
      the local task automata and $\mathop
{||}\limits_{i = 1}^3 F(P_i ( A_S ))\ncong A_S$, since $E_3$ is a
source for $a$. Similarly, failure of private events $e_1$ and $e_2$
in $E_1$ and $E_2$, respectively, disables the global transitions on
these events. As another example for non-passive failure, consider
the communication pattern of $1\in snd_a(3)$, $\{2, 3\}\subseteq
snd_a(1)$, while $a$ fails in $E_1$, Then, the parallel composition
of $F(P_1(A_S))$: \xymatrix@R=0.1cm{
                \ar[r]&\bullet \ar[r]^{e_1}&\bullet} with $\Sigma_1 = \{e_1,
                a\}$,
                $F(P_2(A_S))\cong
\xymatrix@R=0.1cm{
                \ar[r]&\bullet \ar[r]^{e_2}&\bullet}$ with $\Sigma_2 = \{e_2\}$, and
$F(P_3(A_S))\cong \xymatrix@R=0.1cm{
                \ar[r]&\bullet \ar[r]^{a}&\bullet}$ with $\Sigma_3 = \{a\}$
                 was
                \xymatrix@R=0.1cm{
                \ar[r]&\bullet \ar[dr]_{e_1} \ar[r]^{e_2}&\bullet \ar[r]^{e_1}&\bullet \\
             && \bullet \ar[ur]_{e_2} }
%\xymatrix@R=0.1cm{
%                &    &     \bullet  \ar[dr]^{e_1}   \\
%\ar[r]&  \bullet \ar[ur]^{e_2} \ar[dr]_{e_1} & &  \bullet   \\
%              && \bullet \ar[ur]_{e_2}&
%                }
              which is not bisimilar to $A_S$. The reason is
                that in this case, in contrast to the fist case, $a$ was not excluded from
                $\Sigma_1$, while $a$ was stopped in $F(P_1(A_S))$. This, due to the synchronization
                constraint in parallel composition, disabled the
                global transitions on $a$.
\end{example}
\begin{example}
This example shows a decomposable automaton that will no longer stay
decomposable after a passive event failure, since $EF1$ is not
satisfied, although other three conditions, $EF2$, $EF3$ and $EF4$,
are fulfilled. Consider the automaton $A_S$: \xymatrix@R=0.1cm{
                \ar[r]&  \bullet \ar[r]^{a} \ar[dr]_{b}&\bullet  \\
             && \bullet                   }
             with local event sets $E_1=\{a\}$, $E_2=\{b\}$, and
            $E_3=\{a, b\}$ with $3\in snd_{a}(1)$ and $3\in
            snd_{b}(2)$, and no other sending and receiving links. This automaton is
             decomposable, as the parallel composition of
             $P_1(A_S)$:\xymatrix@R=0.1cm{ \ar[r]&\bullet
\ar[r]^{a}&\bullet}, $P_2(A_S)$: \xymatrix@R=0.1cm{ \ar[r]&\bullet
\ar[r]^{b}&\bullet} and $P_3(A_S)\cong A_S$ bisimulates $A_S$. Now,
suppose that $a$ is failed in $E_3$. Then, the parallel composition
of $F(P_1(A_S))$: \xymatrix@R=0.1cm{ \ar[r]&\bullet
\ar[r]^{a}&\bullet} with $\Sigma_1 = \{a\}$, $F(P_2(A_S))$:
\xymatrix@R=0.1cm{ \ar[r]&\bullet \ar[r]^{b}&\bullet} with $\Sigma_2
= \{b\}$, and $F(P_3(A_S))$: \xymatrix@R=0.1cm{ \ar[r]&\bullet
\ar[r]^{b}&\bullet} with $\Sigma_3 = \{b\}$, is $\mathop
{||}\limits_{i = 1}^3 F(P_i ( A_S ))$: \xymatrix@R=0.1cm{
                \ar[r]&\bullet \ar[dr]_{a} \ar[r]^{b}&\bullet \ar[r]^{a}&\bullet \\
             && \bullet \ar[ur]_{b} } which is not bisimilar to $A_S$.
The reason is violation of $EF1$, as after the failure of $a$ in
$E_3$, neither there exists an agent that knows both events $a$ and
$b$ to decide on the selection between them, nor both permutations
are legal in $A_S$. If $A_S$ was $A_S$:\xymatrix@R=0.1cm{
                \ar[r]&\bullet \ar[dr]_{b} \ar[r]^{a}&\bullet  \ar[r]^{b}& \bullet  \\
             && \bullet  \ar[r]^{a}&\bullet \\
             }, then, failure of $a$ in $E_3$ had no effect on
             decomposability of $A_S$.
\end{example}

\begin{example}
This example shows a decomposable automaton that will no longer stay
decomposable after a passive failure, as $EF2$ is not satisfied,
although other three conditions, $EF1$, $EF3$ and $EF4$ are
fulfilled. Consider the automaton $A_S$: \xymatrix@R=0.1cm{
                \ar[r]&\bullet \ar[r]^{a}&\bullet\ar[r]^{b}&\bullet}  with
             local event sets $E_1=\{a\}$, $E_2=\{b\}$ and
             $E_3=\{a, b\}$, with $3\in snd_{a}(1)$ and $3\in
            snd_{b}(2)$ with no other sending and receiving links. This automaton is
             decomposable, as the parallel composition of
             $P_1(A_S)$:\xymatrix@R=0.1cm{ \ar[r]&\bullet
\ar[r]^{a}&\bullet}, $P_2(A_S)$: \xymatrix@R=0.1cm{ \ar[r]&\bullet
\ar[r]^{b}&\bullet} and $P_3(A_S)\cong A_S$ bisimulates $A_S$. Now,
suppose that $a$ is failed in $E_3$. Then, the parallel composition
of $F(P_1(A_S))$: \xymatrix@R=0.1cm{ \ar[r]&\bullet
\ar[r]^{a}&\bullet} with $\Sigma_1 = \{a\}$, $F(P_2(A_S))$:
\xymatrix@R=0.1cm{ \ar[r]&\bullet \ar[r]^{b}&\bullet} with $\Sigma_2
= \{b\}$, and $F(P_3(A_S))$: \xymatrix@R=0.1cm{ \ar[r]&\bullet
\ar[r]^{b}&\bullet} with $\Sigma_3 = \{b\}$, is $\mathop
{||}\limits_{i = 1}^3 F(P_i ( A_S ))$:
% \xymatrix@R=0.1cm{
%                &    &     \bullet  \ar[dr]^{a}   \\
%\ar[r]&  \bullet \ar[ur]^{b} \ar[dr]_{a} & &  \bullet   \\
%              && \bullet \ar[ur]_{b}&
%                }
\xymatrix@R=0.1cm{
                \ar[r]&\bullet \ar[dr]_{a} \ar[r]^{b}&\bullet \ar[r]^{a}&\bullet \\
             && \bullet \ar[ur]_{b} }
                 which is not bisimilar to $A_S$.
The reason is violation of $EF2$, as after the failure of $a$ in
$E_3$, neither there exists an agent that knows both events $a$ and
$b$ to decide on the order of them, nor both orders are legal in
$A_S$.
\end{example}

\begin{example}
This example illustrates a decomposable automaton that satisfies
$EF1$, $EF2$ and $EF4$, but it will not remain decomposable after a
passive event failure, due to violation of $EF3$. Consider the
automaton $A_S$:\\ \xymatrix@R=0.1cm{
                \ar[r]&\bullet \ar[dr]_{c} \ar[r]^{a}&\bullet \ar[r]^{b}&\bullet \ar[r]^{c}&\bullet \\
             && \bullet  \ar[r]^{b}&\bullet}
with local event sets $E_1 = \{a, b, c\}$, $E_2 = \{b, c\}$ and $E_3
= \{a, b\}$ and communication pattern $1\in snd_{\{b, c\}}(2)$,
$1\in snd_{a}(3)$, $3\in snd_{b}(2)$, with no other communication
links. $A_S$ is decomposable, as the parallel composition of
$P_1(A_S)\cong A_S$, $P_2(A_S)$: \xymatrix@R=0.1cm{
                \ar[r]&\bullet \ar[dr]_{c} \ar[r]^{b}&\bullet \ar[r]^{c}&\bullet  \\
             && \bullet  \ar[r]^{b}&\bullet}, and $P_3(A_S)$:
\xymatrix@R=0.1cm{
                \ar[r]&\bullet \ar[dr]_{b} \ar[r]^{a}&\bullet \ar[r]^{b}&\bullet  \\
             && \bullet  } is bisimilar to $A_S$.
             Now, assume that $b$ fails in $E_1$. Then, the parallel
             composition of $F(P_1(A_S))$: \xymatrix@R=0.1cm{
                \ar[r]&\bullet \ar[dr]_{c} \ar[r]^{a}&\bullet \ar[r]^{c}&\bullet  \\
             && \bullet } with $\Sigma_1 = \{a,
             c\}$, $F(P_2(A_S))$: \xymatrix@R=0.1cm{
                \ar[r]&\bullet \ar[dr]_{c} \ar[r]^{b}&\bullet \ar[r]^{c}&\bullet  \\
             && \bullet  \ar[r]^{b}&\bullet} with $\Sigma_2 = \{b,
             c\}$ and $F(P_3(A_S))$:\\ \xymatrix@R=0.1cm{
                \ar[r]&\bullet \ar[dr]_{b} \ar[r]^{a}&\bullet \ar[r]^{b}&\bullet  \\
             && \bullet  } with $\Sigma_3 = \{a, b\}$ is $\mathop
{||}\limits_{i = 1}^3 F(P_i ( A_S ))$:
                \xymatrix@C=0.5cm{
&\bullet & \bullet\ar[l]_{b} & \bullet \ar[r]^{b}& \bullet\\
\ar[r]&\bullet  \ar[ur]_{c}\ar[dr]^{b}\ar[r]^{a}& \bullet
\ar[ur]_{c}\ar[dr]^{b}\\
&\bullet & \bullet\ar[l]_{c} & \bullet \ar[r]^{c}& \bullet }\\
 that is no longer bisimilar to $A_S$ due to violation
       of $EF3$ as it contains strings $acb$ and $bc$ that do not appear in $A_S$.
\end{example}

\begin{example}
This example shows a decomposable automaton that does not remain
decomposable against a passive event failure, when it does not
satisfy $EF4$, although it fulfils $EF1$, $EF2$ and $EF3$. Consider
the automaton $A_S$: \xymatrix@R=0.1cm{
                \ar[r]&\bullet \ar[dr]_{c} \ar[r]^{b}&\bullet \ar[r]^{c}&\bullet \\
             && \bullet \ar[r]^{a}&\bullet } with
             local event sets $E_1=\{a, b, c\}$, $E_2=\{a, b\}$ and $E_3=\{b,
             c\}$, with communication structure $1\in snd_{\{a,
             b\}}(2)$, $1\in snd_{c}(3)$, $3\in snd_{b}(2)$, with no
             other communication links. This automaton is
             decomposable, as the parallel composition of
             $P_1(A_S)\cong A_S$, $P_2(A_S)$: \xymatrix@R=0.1cm{
                \ar[r]&\bullet \ar[dr]_{a} \ar[r]^{b}&\bullet \\
             && \bullet} and $P_3(A_S)$: \xymatrix@R=0.1cm{
                \ar[r]&\bullet \ar[dr]_{c} \ar[r]^{b}&\bullet \ar[r]^{c}&\bullet \\
             && \bullet } is bisimilar to $A_S$. Now, assume that $b$ fails  in $E_1$, then the parallel
                composition of $F(P_1(A_S))$: \xymatrix@R=0.1cm{
                \ar[r]&\bullet \ar[r]^{c} \ar[dr]_{c}&\bullet\\
                &&\bullet \ar[r]^{a}&\bullet } with $\Sigma_1=\{a, c\}$, $F(P_2(A_S))\cong  \xymatrix@R=0.1cm{
                \ar[r]&\bullet \ar[dr]_{a} \ar[r]^{b}&\bullet \\
             && \bullet}$ with $\Sigma_2=\{a, b\}$ and $F(P_3(A_S))\cong  \xymatrix@R=0.1cm{
                \ar[r]&\bullet \ar[dr]_{c} \ar[r]^{b}&\bullet \ar[r]^{c}&\bullet \\
             && \bullet }$ with $\Sigma_3=\{b, c\}$ is
                $\mathop
{||}\limits_{i = 1}^3 F(P_i ( A_S ))$:
                \xymatrix@R=0.1cm{
                \ar[r]&\bullet \ar[dl]^{c}\ar[dr]_{c} \ar[r]^{b}&\bullet \ar[r]^{c}&\bullet \\
             \bullet&& \bullet \ar[r]^{a}&\bullet }\\  that is no longer bisimilar to $A_S$ due to violation
       of $EF4$, as there does not exist a deterministic automaton $P_1^{\prime}(A_S)$ such that $P_1^{\prime}(A_S)\cong F(P_1(A_S))$.
\end{example}

\section{Cooperative tasking under event failure}
\label{Cooperative tasking under event failure} So far, we have
presented the necessary and sufficient conditions for a decomposable
task automaton to remain decomposable in spite of passive failures.
%, namely, the fault-tolerance of the divide-part in
%the divide-and-conquer paradigm. Now, following result represents
%the fault-tolerance of the conquer-part in the top-down approach.
Now, assume that the global task automaton is decomposable and local
controllers have been designed in such a way that local
specifications are satisfied, and hence due to Lemma
\ref{Top-down-bisimilarity}, the global specification is satisfied,
by the team. Furthermore, assume that event failures occur on some
shared events, but due to passivity of failed events and
$EF1$-$EF4$, the global task automaton remains decomposable. Then,
the next question is Problem \ref{Cooperative tasking under event
failure-problem} to understand whether, the team is still able to
achieve the global specification. Following result answers this
question.

\begin{theorem}\label{Top-Down Approach Under Failure-n-agents}
Consider a concurrent plant $A_P:= \mathop {||}\limits_{i = 1}^n
A_{P_i}$ and a deterministic task automaton $A_S = (Q, q_0, E =
\mathop {\cup}\limits_{i = 1}^n E_i, \delta)$ as the global
specification. Assume that $A_S$ is decomposable, i.e., $A_S \cong
\mathop {||}\limits_{i = 1}^n P_i (A_S )$, and suppose that local
controller automata $A_{C_i}$, $i = 1, \ldots, n$ have been designed
such that each local closed loop system satisfies its corresponding
local task, i.e., $A_{C_i}||A_{P_i}\cong P_i(A_S)$, $i = 1, \ldots,
n$. Assume furthermore that $\bar{E}_i = \{a_{i,r}\}$ fail in $E_i$,
$r\in\{1,...,n_i\}$, $\bar{E}_i$ are passive for $i\in\{1, \ldots,
n\}$, and $A_S$ satisfies $EF1$-$EF4$. Then, the team can still
achieve its global specification, i.e., $\mathop {||}\limits_{i =
1}^n F(A_{P_i}||A_{C_i})\cong A_S$.
\end{theorem}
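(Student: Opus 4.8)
The plan is to reduce the cooperative-tasking statement to the decomposability result already established in Theorem 5, by exploiting the two facts we are handed: that each local closed loop exactly realizes its local task ($A_{C_i}\|A_{P_i}\cong P_i(A_S)$) and that the failures are passive and satisfy $EF1$--$EF4$. First I would observe that the failure operator $F(\cdot)$ and parallel composition interact compatibly with bisimulation: since $A_{C_i}\|A_{P_i}\cong P_i(A_S)$ and $F(\cdot)$ is defined as a natural projection (erasing the failed passive events from the event set and hiding their transitions), applying $F$ to both sides should preserve bisimilarity, giving $F(A_{C_i}\|A_{P_i})\cong F(P_i(A_S))$ for each $i$. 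This is the step that transports the control-synthesis hypothesis into the language of local \emph{task} automata under failure.

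Next I would use the fact that parallel composition respects bisimulation componentwise: if $B_i\cong B_i'$ for each $i$, then $\mathop{\|}\limits_{i=1}^n B_i\cong\mathop{\|}\limits_{i=1}^n B_i'$. Applying this with $B_i=F(A_{C_i}\|A_{P_i})$ and $B_i'=F(P_i(A_S))$ yields
\[
\mathop{\|}\limits_{i=1}^n F(A_{P_i}\|A_{C_i})\;\cong\;\mathop{\|}\limits_{i=1}^n F(P_i(A_S)).
\]
At this point the problem is entirely about the right-hand side, which is precisely the object controlled by Theorem 5. Invoking Theorem 5 under the standing hypotheses (passivity of all $\bar{E}_i$ together with $EF1$--$EF4$) gives $\mathop{\|}\limits_{i=1}^n F(P_i(A_S))\cong A_S$. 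Chaining the two bisimilarities and using transitivity of $\cong$ then delivers the claim $\mathop{\|}\limits_{i=1}^n F(A_{P_i}\|A_{C_i})\cong A_S$.

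The main obstacle I anticipate is justifying the very first step cleanly, namely that $F$ commutes with $\|$ up to bisimulation and that $F$ preserves bisimilarity. Concretely, one must check that $F(A_{C_i}\|A_{P_i})$ really agrees (up to bisimulation) with the projection $P_{\Sigma_i}(A_{C_i}\|A_{P_i})$, and that hiding the passive failed events in the closed loop corresponds to hiding them in $P_i(A_S)$ — this is exactly where passivity is essential, since passivity guarantees the failed events are excluded from $\Sigma_i$ in such a way that the synchronization constraints in the remaining agents still generate the global transitions (as argued around Remark 3 and Lemma 3). A secondary technical point is that $F$ is a projection-type operator and natural projection does not in general preserve bisimulation for arbitrary automata; here it works because $P_i(A_S)$ and its failure-projection are being compared within the same decomposition framework, so I would lean on the structural identity $F(P_i(A_S))\cong P_{\Sigma_i}(A_S)=P_{E_i\backslash\bar{E}_i}(A_S)$ already used in the proof of Theorem 5 rather than proving a general commutation lemma. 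Once these identifications are in place the argument is a short transitivity chain.
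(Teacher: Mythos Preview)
Your proposal is correct and follows essentially the same route as the paper's own proof: both establish the chain $\mathop{\|}\limits_{i=1}^n F(A_{P_i}\|A_{C_i})\cong \mathop{\|}\limits_{i=1}^n F(P_i(A_S))\cong A_S$, using passivity to identify $F$ with the projection $P_{E_i\backslash\bar E_i}$, the congruence of parallel composition with respect to bisimulation (the paper cites this as Lemma~6 of \cite{Automatica2010-2-agents-decomposability}), and Theorem~\ref{Decomposability under event failure-Theorem} for the final step. The paper merely unpacks your first bisimilarity through the explicit intermediates $\mathop{\|}\limits_{i=1}^n P_{E_i\backslash\bar E_i}(A_{P_i}\|A_{C_i})\cong \mathop{\|}\limits_{i=1}^n P_{E_i\backslash\bar E_i}(P_i(A_S))$, which is exactly the ``structural identity'' you say you would lean on; your anticipated obstacle about $F$ preserving bisimilarity is handled in the paper by this same identification of $F$ with a natural projection.
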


\begin{proof}
Firstly, decomposability of $A_S$ and $A_{C_i}||A_{P_i}\cong
P_i(A_S)$, $i = 1, \ldots, n$, due to Lemma
\ref{Top-down-bisimilarity}, implies that $\mathop {||}\limits_{i =
1}^n (A_{P_i}||A_{C_i})\cong A_S$, i.e., the global specification is
satisfied by the team. Moreover, the global specification remains
satisfied, in spite of event failures, if $\bar{E}_i$ are passive
for $i\in\{1, \ldots, n\}$, and $A_S$ satisfies $EF1$-$EF4$, since
$\mathop {||}\limits_{i = 1}^n F(A_{P_i}||A_{C_i})\cong \mathop
{||}\limits_{i = 1}^n P_{E_i\backslash \bar{E_i}}\left(
A_{P_i}||A_{C_i} \right)\cong \mathop {||}\limits_{i = 1}^n
P_{E_i\backslash \bar{E_i}}\left( P_i(A_S) \right)\cong \mathop
{||}\limits_{i = 1}^n F\left( P_i(A_S) \right)\cong \mathop
{||}\limits_{i = 1}^n P_i(A_S) \cong A_S$. In this expression, the
first and the third bisimilarities come from passivity of
$\bar{E}_i$, $i\in\{1, \ldots, n\}$, and the second bisimilarity is
followed from $A_{C_i}||A_{P_i}\cong P_i(A_S)$, $i = 1, \ldots, n$,
definition of natural projection and from the fact $\left(A_1\cong
A_2 \right)\wedge \left(A_3\cong A_4\right)\Rightarrow
\left(A_1\parallel \ A_3\cong A_2\parallel \ A_4\right)$ (Lemma $6$
in \cite{Automatica2010-2-agents-decomposability}). The fourth
equivalence is implied from passivity of $\bar{E}_i$, $i = 1,
\ldots, n$ and $EF1$-$EF4$, and finally, the last bisimilarity is
due to the decomposability assumption of $A_S$.
\end{proof}
\begin{remark}
The significance of Theorem \ref{Top-Down Approach Under
Failure-n-agents} is that under passivity condition and $EF1$-$EF4$,
although local task automata may change after the failure (i.e.,
$F(P_i(A_S))\ncong P_i(A_S)$), the team of agents can satisfy the
global specification, as $\mathop {||}\limits_{i = 1}^n
F(A_{P_i}||A_{C_i})\cong  \mathop {||}\limits_{i = 1}^n
F(P_i(A_S))\cong \mathop {||}\limits_{i = 1}^n P_i(A_S)\cong A_S$.
\end{remark}

\begin{example}
This example illustrates a specification for a team of three agents
that is globally satisfied and remains satisfied in spite of passive
event failures, provided $EF1$-$EF4$. Consider a concurrent plant
$A_P := \mathop {||}\limits_{i = 1}^3 A_{P_i}$ with local plants
$A_{P_1}$: \xymatrix@R=0.1cm{
                \ar[r]&\bullet \ar[r]^{e_1}\ar[dr]_{a}&\bullet   \ar[r]^{a}&\bullet \\
             && \bullet  \ar[r]^{e_1}&\bullet }  with
             $E_1=\{a, e_1\}$, $A_{P_2}$:\\  \xymatrix@R=0.1cm{
             \bullet& \ar[l]_{a} \bullet & \ar[l]_{b} \bullet &\ar[d]&  \bullet
             \ar[r]^{e_2} & \bullet \ar[r]^{a} & \bullet\\
             &&& \bullet \ar[ur]^{b}\ar[ul]_{e_2}\ar[dr]_{a}
             \ar[dl]_{b}\\
              \bullet& \ar[l]_{e_2} \bullet & \ar[l]_{a} \bullet &&  \bullet
             \ar[r]^{e_2} & \bullet \ar[r]^{b} & \bullet
              }
              with
             $E_2=\{a, b, e_2\}$, $A_{P_3}$:\\ \xymatrix@R=0.1cm{
                \ar[r]&\bullet \ar[r]^{e_3}\ar[dr]_{b}&\bullet   \ar[r]^{b}&\bullet \\
             && \bullet  \ar[r]^{e_3}&\bullet }  with
             $E_3=\{b, e_3\}$, having communication pattern $1\in
             send_a(2)$, $3\in
             send_b(2)$, and no more communication links.
Assume that the global specification is given as $A_S$:\\
\xymatrix@R=0.1cm{
             \ar[r]&\bullet \ar[dr]_{a} \ar[r]^{e_1}&\bullet \ar[r]^{a}&\bullet \ar[r]^{e_2}&\bullet \ar[r]^{b}&\bullet \ar[r]^{e_3}&\bullet\\
             && \bullet \ar[dr]_{e_2} \ar[r]^{e_1}&\bullet\ar[r]^{e_2}&\bullet  \ar[r]^{b}&\bullet \ar[r]^{e_3}&\bullet\\
             &&& \bullet \ar[dr]_{b} \ar[r]^{e_1}&\bullet\ar[r]^{b}&\bullet  \ar[r]^{e_3}&\bullet \\
             &&&& \bullet \ar[dr]_{e_3} \ar[r]^{e_1}&\bullet\ar[r]^{e_3}&\bullet\\
             &&&&&\bullet  \ar[r]^{e_1}&\bullet}. $A_S$ is
             decomposable, since the parallel composition of
             $P_1(A_S)\cong \xymatrix@R=0.1cm{
                \ar[r]&\bullet \ar[r]^{e_1}\ar[dr]_{a}&\bullet   \ar[r]^{a}&\bullet \\
             && \bullet  \ar[r]^{e_1}&\bullet }$, $P_2(A_S)\cong\\ \xymatrix@R=0.1cm{
                \ar[r]&\bullet \ar[r]^{a}&\bullet
                \ar[r]^{e_2}&\bullet \ar[r]^{b}&\bullet}$ and $P_3(A_S)\cong \xymatrix@R=0.1cm{
                \ar[r]&\bullet \ar[r]^{b}&\bullet
                \ar[r]^{e_3}&\bullet}$
                is bisimilar to $A_S$. Now,
                taking local controller as $A_{C_i} := P_i(A_S)$, $i = 1, 2, 3$ results in $A_{P_i}||A_{C_i} \cong P_i(A_S)$, $i = 1, 2, 3$
                and $\mathop {||}\limits_{i = 1}^3
(A_{P_i}||A_{C_i})\cong\\ \xymatrix@R=0.5cm{
             \bullet \ar[r]^{a}& \bullet \ar[r]^{e_2}&\bullet \ar[r]^{b}&\bullet \ar[r]^{e_3}&\bullet\\
             \bullet \ar[u]^{e_1} \ar[r]^{a} &\bullet \ar[d]^{e_1} \ar[r]^{e_2}&\bullet \ar[d]^{e_1} \ar[r]^{b}&\bullet \ar[d]^{e_1} \ar[r]^{e_3}&\bullet \ar[d]^{e_1}\\
             \ar[u]& \bullet \ar[r]^{e_2}&\bullet \ar[r]^{b}&\bullet \ar[r]^{e_3}&\bullet\\
             }$
%\xymatrix@R=0.5cm{
%             \bullet \ar[dr]^{a}\\
%             \bullet \ar[u]^{e_1}\ar[dr]^{a}& \bullet \ar[r]^{e_2}&\bullet \ar[r]^{b}&\bullet \ar[r]^{e_3}&\bullet\\
%             \ar[u]& \bullet \ar[d]^{e_1} \ar[r]^{e_2}&\bullet \ar[d]^{e_1} \ar[r]^{b}&\bullet \ar[d]^{e_1} \ar[r]^{e_3}&\bullet \ar[d]^{e_1}\\
%             & \bullet \ar[r]^{e_2}&\bullet \ar[r]^{b}&\bullet \ar[r]^{e_3}&\bullet\\
%             }$
that is bisimilar to $A_S$, i.e., global specification is satisfied
by designing local controllers $A_{C_i}$ to satisfy local
satisfactions $P_i(A_S)$.

Now, suppose that $a$ fails in $E_1$. Since $a$ is passive in $E_1$
and $A_S$ satisfies $EF1$-$EF4$ (since $\delta(q_0,
e_1ae_2be_3)!\wedge \delta(q_0, ae_1e_2be_3)!$ in $A_S$, and hence
$EF1$ and $EF2$ are satisfied; $\Sigma_1 = \{e_1\}$, $\Sigma_2 =
\{a, b, e_2\}$, $\Sigma_3 = \{b, e_3\}$ with the only shared events
$b\in \Sigma_2\cap \Sigma_3$, and the corresponding interleaving
between $F(P_2(A_S)) \cong P_2(A_S)$ and $F(P_3(A_S)) \cong
P_3(A_S)$ is $ae_2be_3$ that appears in $A_S$, with all permutations
with $e_1$ from $F(P_1(A_S))$, and hence, $EF3$ is satisfied, and
finally, $EF4$ is fulfilled since $F(P_1(A_S))$, $F(P_2(A_S))$ and
$F(P_3(A_S))$ are respectively bisimilar to automata
$\xymatrix@R=0.1cm{
                \ar[r]&\bullet \ar[r]^{e_1}&\bullet}$, $\xymatrix@R=0.1cm{
                \ar[r]&\bullet \ar[r]^{a}&\bullet
                \ar[r]^{e_2}&\bullet \ar[r]^{b}&\bullet}$ and $\xymatrix@R=0.1cm{
                \ar[r]&\bullet \ar[r]^{b}&\bullet
                \ar[r]^{e_3}&\bullet}$ that all are deterministic.
Therefore, according to Theorem \ref{Decomposability under event
failure-Theorem}, $\mathop {||}\limits_{i = 1}^3 F(P_i(A_S))\cong
A_S$.

Moreover, since the failed event $a$ is passive in $E_1$ and $A_S$
satisfies $EF1$-$EF4$, as Theorem \ref{Top-Down Approach Under
Failure-n-agents}, the global specification remains satisfied after
failure, as $\mathop {||}\limits_{i = 1}^3 F(A_{P_i}||A_{C_i})\cong
\mathop {||}\limits_{i = 1}^3 F(P_i(A_S))\cong \\ \xymatrix@R=0.5cm{
                \ar[r]&\bullet \ar[d]^{e_1}\ar[r]^{a}&\bullet \ar[d]^{e_1}
                \ar[r]^{e_2}&\bullet \ar[d]^{e_1} \ar[r]^{b}&\bullet \ar[d]^{e_1} \ar[r]^{e_3}&\bullet \ar[d]^{e_1}\\
                 &\bullet \ar[r]^{a}&\bullet
                \ar[r]^{e_2}&\bullet \ar[r]^{b}&\bullet \ar[r]^{e_3}&\bullet}$
that is bisimilar to $A_S$.

\end{example}

\section{Special case: more insight into $2$-agent case}\label{Special case - 2-agent}

This part provides a closer look into the two agent case and
illustrated the notion of global decision making after the event
failures.

First, following lemma presents some properties on a $2$-agent
system that experiences passive failures. The properties will be
then used to provide a deeper insight on the global decision making
of the team on successive and adjacent transitions, in spite of
passive failures.
\begin{lemma}\label{Passivity properties-2-agents} Consider a
deterministic task automaton $A_S  =( Q, q_0 , E = E_1\cup E_2,
\delta)$ and assume that $A_S$ is decomposable with respect to
parallel composition and natural projections $P_i$, $i=1, 2$, and
furthermore assume that $\bar{E}_i = \{a_{i,r}\}$,
$r\in\{1,...,n_i\}$ fail in $E_i$, $i\in\{1, 2\}$. If $\bar{E}_i$,
$i\in\{1, 2\}$ are passive, then $\forall i\in\{1, 2\}$
\begin{enumerate}
\item $\bar{E}_1\cap \bar{E}_2 = \emptyset$;
\item $\bar{E}_1, \bar{E}_2\in E_1\cap E_2$;
\item $\Sigma_1 \backslash \Sigma_2 =  (E_1\backslash
E_2)\cup \bar{E}_2$ and $\Sigma_2 \backslash \Sigma_1 =
(E_2\backslash E_1)\cup \bar{E}_1$.
\end{enumerate}
\end{lemma}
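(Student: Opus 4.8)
The plan is to reduce all three claims to the single structural consequence of passivity. In the two-agent setting the definition of passive failure asserts $E = \Sigma_1 \cup \Sigma_2$, where $\Sigma_i = E_i \backslash \bar{E}_i$; since $E = E_1 \cup E_2$ holds by construction, passivity says precisely that no event disappears from the union of the refined local alphabets. My first step is to unfold the definition to record this identity, as it is the only input the remaining argument needs. Throughout I take complements within the universe $E$.

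For the second claim, $\bar{E}_1, \bar{E}_2 \subseteq E_1 \cap E_2$ (reading the $\in$ as set inclusion), I would argue by contradiction. Since $\bar{E}_i \subseteq E_i$ by definition, it suffices to prove $\bar{E}_1 \subseteq E_2$, the companion following by swapping indices. Suppose $a \in \bar{E}_1 \backslash E_2$. Then $a \notin \Sigma_2$ because $\Sigma_2 \subseteq E_2$, and $a \notin \Sigma_1 = E_1 \backslash \bar{E}_1$ because $a \in \bar{E}_1$; hence $a \notin \Sigma_1 \cup \Sigma_2 = E$, contradicting $a \in E_1 \subseteq E$. This is the conceptual heart of the lemma: a failed event that were private to a single agent would vanish globally, which passivity forbids.

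The first claim, $\bar{E}_1 \cap \bar{E}_2 = \emptyset$, follows by the same disappearance argument. If $a \in \bar{E}_1 \cap \bar{E}_2$, then $a \notin \Sigma_1$ and $a \notin \Sigma_2$, whence $a \notin E$, again contradicting $a \in E$. Thus the two failed sets are disjoint.

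With the first two parts established, the third claim is pure set algebra. Writing $\Sigma_1 \backslash \Sigma_2 = (E_1 \backslash \bar{E}_1) \cap (\overline{E_2} \cup \bar{E}_2)$ and distributing yields two pieces, $(E_1 \backslash E_2) \cap \overline{\bar{E}_1}$ and $(E_1 \cap \bar{E}_2) \cap \overline{\bar{E}_1}$. The inclusion $\bar{E}_1 \subseteq E_2$ forces $\bar{E}_1 \cap (E_1 \backslash E_2) = \emptyset$, collapsing the first piece to $E_1 \backslash E_2$; the inclusion $\bar{E}_2 \subseteq E_1$ together with $\bar{E}_1 \cap \bar{E}_2 = \emptyset$ collapses the second to $\bar{E}_2$. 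Their union is $(E_1 \backslash E_2) \cup \bar{E}_2$, and the companion identity $\Sigma_2 \backslash \Sigma_1 = (E_2 \backslash E_1) \cup \bar{E}_1$ follows by interchanging the indices $1$ and $2$. The only place demanding care is the bookkeeping of complements in the distribution step, which is routine; there is no genuine obstacle here, the entire substance being carried by the passivity identity $E = \Sigma_1 \cup \Sigma_2$.
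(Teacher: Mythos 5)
Your proof is correct, but for the first two items it takes a genuinely different route from the paper's. The paper never argues from the coverage identity $E = \Sigma_1 \cup \Sigma_2$; instead it invokes the communication-structure reading of passivity from Remark \ref{Meaning of Passivity}: in a two-agent system, an event $e$ passive in $E_i$ must satisfy $snd_e(i) = \emptyset$ (with $n=2$ there is no third agent to act as relay, so the relay clause forces the sending set to be empty) together with $rcv_e(i) \neq \emptyset$. Item 1 then follows because $e \in \bar{E}_1 \cap \bar{E}_2$ would force $snd_e(1) = snd_e(2) = \emptyset$ while each agent must receive $e$ from the other, and item 2 follows because $rcv_e(i) \neq \emptyset$ forces $loc(e) > 1$, i.e., $e \in E_1 \cap E_2$. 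Your ``disappearance'' argument instead uses only the formal set-theoretic definition of passivity stated in the paper ($E = \cup_i \Sigma_i$), which makes it more self-contained and arguably closer to the letter of the definition, since the send/receive model is offered in the paper only as an interpretation of that identity; what the paper's route buys is the engineering meaning, namely that passive failed events are exactly the redundantly received shared events. Your item 3 computation coincides with the paper's essentially verbatim (both distribute complements and use items 1 and 2 to collapse the two pieces), and your reading of the paper's loose notation $\bar{E}_1, \bar{E}_2 \in E_1 \cap E_2$ as set inclusion is the intended one.
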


Now, following lemma represents the conditions for maintaining the
capability of a team of two cooperative agents for global decision
making on the orders and selections of transitions in the global
task automaton, after passive event failures.

\begin{lemma}\label{EF1 and EF2 respect to DC1 and DC2}
Consider a deterministic task automaton $A_S = (Q, q_0, E = E_1\cup
E_2, \delta)$. Assume that $A_S$ is decomposable, i.e., $A_S \cong
P_1 (A_S )||P_2 (A_S )$, and furthermore, assume that $\bar{E}_i =
\{a_{i,r}\}$ fail in $E_i$, $r\in\{1,...,n_i\}$, and $\bar{E}_i$ are
passive for $i\in\{1,2\}$. Then, the following two expressions are
equivalent:
\begin{itemize}
\item ($EF1$ and $EF2$): $\forall (e_1, e_2)\in \{(E_1\backslash E_2, \bar{E}_1), ( E_2\backslash
E_1, \bar{E}_2), ( \bar{E}_1, \bar{E}_2)\}$, $q\in Q$, $s\in E^*$:
\begin{eqnarray} [\delta(q, e_1)!\wedge \delta(q, e_2)!]
\Rightarrow [\delta(q,
e_1e_2)!\wedge \delta(q, e_2e_1)!]\label{Switch}\\
\delta(q, e_1e_2s)! \Leftrightarrow \delta(q, e_2e_1s)!\label{Order}
\end{eqnarray}
\item ($DC1_{\Sigma}$ and $DC2_{\Sigma}$): $\forall e_1 \in \Sigma_1\backslash \Sigma_2, e_2 \in \Sigma_2\backslash \Sigma_1, q\in
Q$,  $s\in E^*$: \begin{eqnarray} [\delta(q, e_1)!\wedge \delta(q,
e_2)!] \Rightarrow [\delta(q,
e_1e_2)!\wedge \delta(q, e_2e_1)!]\nonumber \\
\delta(q, e_1e_2s)! \Leftrightarrow \delta(q, e_2e_1s)!.\nonumber
\end{eqnarray}
\end{itemize}
\end{lemma}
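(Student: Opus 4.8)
The plan is to reduce the claimed equivalence to a set-theoretic decomposition of the refined private event sets, supplied by Lemma~\ref{Passivity properties-2-agents}, after which one direction becomes a mere restriction of the universal quantifier and the other requires filling in a single residual family of pairs using the pre-failure decomposability of $A_S$. First I would record the key structural fact, part~3 of Lemma~\ref{Passivity properties-2-agents}, that $\Sigma_1 \backslash \Sigma_2 = (E_1 \backslash E_2) \cup \bar{E}_2$ and $\Sigma_2 \backslash \Sigma_1 = (E_2 \backslash E_1) \cup \bar{E}_1$. I would also observe that in both expressions the ``$\exists \Sigma_i \supseteq \{e_1, e_2\}$'' disjunct of the general $EF1$/$DC1_\Sigma$ is vacuous: for $e_1 \in \Sigma_1 \backslash \Sigma_2$ and $e_2 \in \Sigma_2 \backslash \Sigma_1$ neither $\Sigma_1$ nor $\Sigma_2$ contains both events (with only two agents), so each expression genuinely reduces to the stated switch/order conditions~(\ref{Switch}) and~(\ref{Order}). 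Finally I would note that both~(\ref{Switch}) and~(\ref{Order}) are symmetric under the swap $e_1 \leftrightarrow e_2$, so the ordering inside each listed pair is immaterial.

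For the direction $(DC1_\Sigma \text{ and } DC2_\Sigma) \Rightarrow (EF1 \text{ and } EF2)$, I would simply restrict the quantifier. Using the decomposition above, each of the three pair families in the $EF$ expression, namely $(E_1 \backslash E_2, \bar{E}_1)$, $(E_2 \backslash E_1, \bar{E}_2)$ and $(\bar{E}_1, \bar{E}_2)$, lies inside $(\Sigma_1 \backslash \Sigma_2) \times (\Sigma_2 \backslash \Sigma_1)$ up to the harmless reordering; hence the $DC_\Sigma$ conditions, which quantify over all such pairs, immediately yield the $EF$ conditions.

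The substantive direction is $(EF1 \text{ and } EF2) \Rightarrow (DC1_\Sigma \text{ and } DC2_\Sigma)$. Here I would take an arbitrary pair $e_1 \in \Sigma_1 \backslash \Sigma_2$, $e_2 \in \Sigma_2 \backslash \Sigma_1$ and split it, via the decomposition, into four combinations according to whether each event is an original private event or a failed shared event. Three of these, $(E_1 \backslash E_2, \bar{E}_1)$, $(\bar{E}_2, E_2 \backslash E_1)$ and $(\bar{E}_2, \bar{E}_1)$, are exactly the families appearing in the $EF$ expression (modulo the swap symmetry), so~(\ref{Switch}) and~(\ref{Order}) hold for them by hypothesis. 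The remaining combination is $e_1 \in E_1 \backslash E_2$, $e_2 \in E_2 \backslash E_1$, which is deliberately absent from the $EF$ list; this is the crux. I would dispatch it using the pre-failure decomposability of $A_S$: by Lemma~\ref{Task Decomposability Theorem for n agents}, $A_S$ satisfies $DC1$ and $DC2$ with respect to the original sets $E_1, E_2$, and since $e_1 \notin E_2$ and $e_2 \notin E_1$ no $E_i$ contains both, so $DC1$ forces both permutations to be defined whenever both single transitions are, and $DC2$ forces the order-insensitivity~(\ref{Order}). Thus the residual combination holds for free, completing the equivalence.

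The main obstacle I expect is precisely this residual combination: recognizing that the $EF$ list omits the pair family $(E_1 \backslash E_2, E_2 \backslash E_1)$ because those decisions are already guaranteed by the decomposability of $A_S$ \emph{before} any failure, and then invoking the original $DC1$/$DC2$ rather than any post-failure condition to cover it. The remaining care is bookkeeping: matching the ordered pairs in the $EF$ list against the unordered quantification in $DC_\Sigma$, which is resolved once the swap symmetry of~(\ref{Switch}) and~(\ref{Order}) is noted.
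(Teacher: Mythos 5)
Your proposal is correct and takes essentially the same route as the paper's proof: both use part~3 of Lemma~\ref{Passivity properties-2-agents} to write $(\Sigma_1\backslash\Sigma_2)\times(\Sigma_2\backslash\Sigma_1)$ as four case families, cover three of them by the $EF1$/$EF2$ hypotheses, and dispatch the residual family $(E_1\backslash E_2)\times(E_2\backslash E_1)$ via the pre-failure decomposability of $A_S$ through $DC1$/$DC2$. If anything, your write-up is slightly more careful on the bookkeeping (the swap symmetry of~(\ref{Switch})--(\ref{Order}) and the explicit restriction-of-quantifier direction), but the substance coincides with the paper's argument.
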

\begin{proof}
See the proof in the Appendix.
\end{proof}

\begin{remark}
$EF1$ and $EF2$ represent the decomposability conditions $DC1$ and
$DC2$ after failure, i.e., for the refined local event sets
$\Sigma_1$ and $\Sigma_2$. They say that after the failure, any
decision on the switch or the order between two events that cannot
be accomplished by at least one of the agents ( neither $\{e_1,
e_2\}\subseteq \Sigma_1$, nor $\{e_1, e_2\}\subseteq \Sigma_2$),
then the decision should not be important (both orders should be
legal). This is a good insight on validity of $DC1$ and $DC2$ after
failure of passive events as it is illustrated in Figure
$\ref{Illustration of EF1 and EF2}$, based on the properties in
Lemma \ref{Passivity properties-2-agents}.

From Lemma \ref{Passivity properties-2-agents}, $(\Sigma_1\backslash
\Sigma_2)\times (\Sigma_2\backslash \Sigma_1)$ is the union of four
spaces: $(E_1\backslash E_2)\times (E_2\backslash E_1)$;
$(E_1\backslash E_2)\times (\bar{E}_1)$; $(\bar{E}_2) \times
(E_2\backslash E_1)$, and $(\bar{E}_1) \times (\bar{E}_2)$ (see
Figure $\ref{Illustration of EF1 and EF2} (a) - (d)$). Note that due
to Lemma \ref{Passivity properties-2-agents}, $\bar{E}_1\cap
\bar{E}_2 = \emptyset$.

Now, according to Lemma $\ref{Task Automaton Decomposition}$ in the
Appendix, for any pair of events from $(E_1\backslash E_2)\times
(E_2\backslash E_1)$ (shown in Figure $\ref{Illustration of EF1 and
EF2}- (a)$), $(\ref{Switch})$ and $(\ref{Order})$ are true as $A_S$
is decomposable, before the failure. Moreover, $(\ref{Switch})$ and
$(\ref{Order})$ are also true for the pair of events from other
three spaces of $(\Sigma_1\backslash \Sigma_2)\times
(\Sigma_2\backslash \Sigma_1)$, due to $EF1$ and $EF2$ as it is
illustrated as follows.

\begin{figure}[ihtp]
      \begin{center}
       \includegraphics[width=0.7\textwidth]{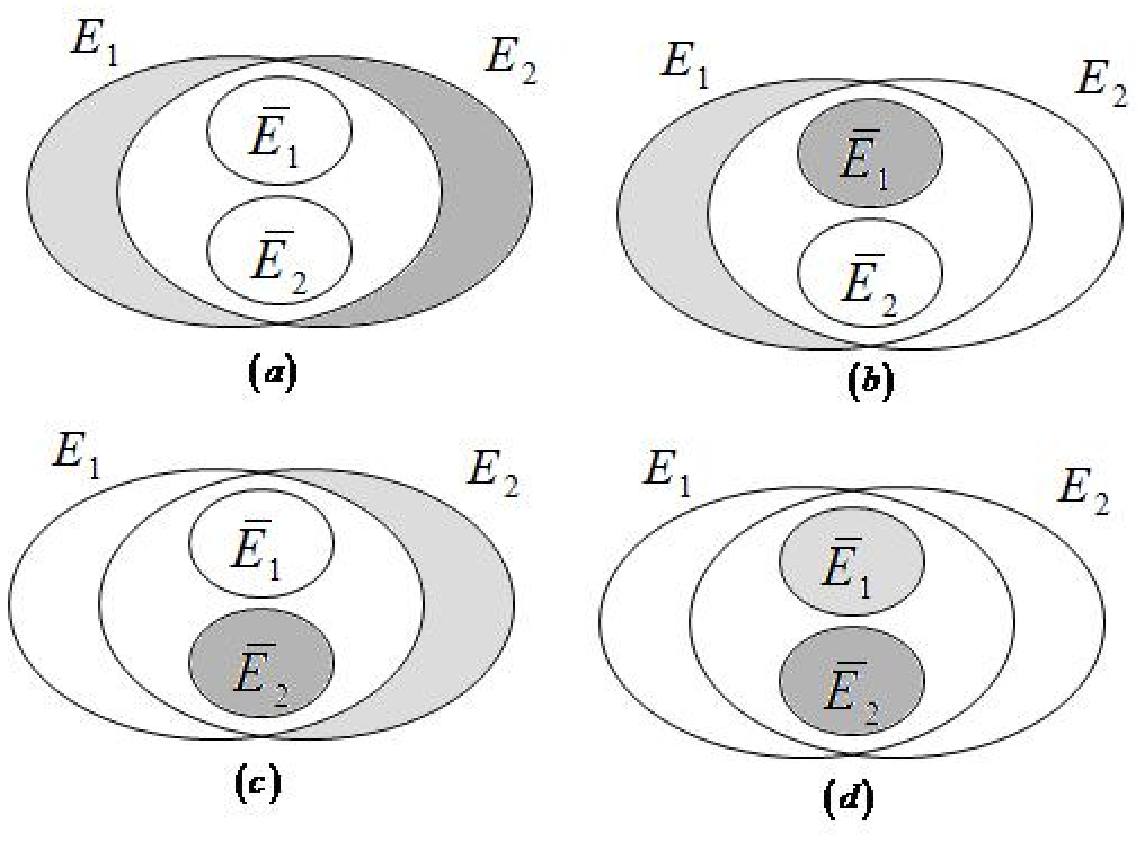}
        \caption{Illustration of $(\Sigma_1\backslash \Sigma_2)\times (\Sigma_2\backslash \Sigma_1) =
[(E_1\backslash E_2)\times (E_2\backslash E_1)] \cup [(E_1\backslash
E_2)\times (\bar{E}_1)]\cup [(\bar{E}_2) \times (E_2\backslash
E_1)]\cup [(\bar{E}_1) \times (\bar{E}_2)]$, (a): $(E_1\backslash
E_2)\times (E_2\backslash E_1)$; (b): $(E_1\backslash E_2)\times
(\bar{E}_1)$; (c): $(\bar{E}_2) \times (E_2\backslash E_1)$, and
(d): $((\bar{E}_1) \times (\bar{E}_2))$.}
         \label{Illustration of EF1 and EF2}
        \end{center}
      \end{figure}

\begin{itemize}
\item Figure $\ref{Illustration of EF1 and
EF2}- (b)$ shows $(E_1\backslash E_2)\times (\bar{E}_1)$: any pair
of events from this space contains in $E_1$, before the failure,
but, contains in neither of $E_1$ and $E_2$ after the failure;
\item Figure $\ref{Illustration of EF1 and
EF2}- (c)$ depicts $(\bar{E}_2) \times (E_2\backslash E_1)$: any
pair of events from this space contains in $E_2$, before the
failure, but, belongs to neither of $E_1$ and $E_2$ after the
failure;
\item Figure $\ref{Illustration of EF1 and
EF2}- (d)$ illustrates $(\bar{E}_1) \times (\bar{E}_2)$: any pair of
events from this space contains in both $E_1$ and $E_2$, before the
failure, but, contains in none of them after the failure.
\end{itemize}
Therefore, since after the failure, for any pair of events from
these three spaces, no agent can be responsible for decision making
on switch/order between them (no local event set contains both
events), then such decisions should not be important as it stated in
$EF1$ and $EF2$.
\end{remark}

Another implication of this result is that when the system is
comprised of only two agents and one of those agent is failed, while
all of its events are passive, then the task automaton remains
decomposable as
\begin{corollary}\label{agent failure for two agents}
Consider a deterministic task automaton $A_S = (Q, q_0, E = E_1\cup
E_2, \delta)$. Assume that $A_S$ is decomposable, i.e., $A_S \cong
P_1 (A_S )||P_1 (A_S )$. Assume furthermore that $E_1$ entirely
fails, i.e., $\bar{E}_1 = E_1$. Then, $A_S \cong \mathop
{||}\limits_{i = 1}^2 F(P_i \left( {A_S } \right))$ if and only if
$\bar{E}_1$ is passive.
\end{corollary}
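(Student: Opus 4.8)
The plan is to prove the two implications separately, exploiting the fact that when \emph{all} events of one agent fail, passivity degenerates into the containment $E_1\subseteq E_2$, which trivialises the refined parallel composition. Throughout I read the statement with $\bar E_1=E_1$ and $\bar E_2=\emptyset$, so that $\Sigma_1=E_1\setminus\bar E_1=\emptyset$ and $\Sigma_2=E_2$.

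For the sufficiency direction I would simply unwind the definition of passivity. Passivity of $\bar E_1=E_1$ means $E=\Sigma_1\cup\Sigma_2=\emptyset\cup E_2=E_2$, i.e. $E_1\subseteq E_2$. Hence $F(P_2(A_S))=P_{\Sigma_2}(A_S)=P_{E_2}(A_S)=P_E(A_S)\cong A_S$, since projecting onto the full alphabet hides no event, whereas $F(P_1(A_S))=P_{\emptyset}(A_S)$ is the one-state automaton over the empty alphabet. In the parallel composition every event of $A_S$ lies in $\Sigma_2\setminus\Sigma_1=E_2$ and is therefore private to the second component (nothing synchronises, as $\Sigma_1\cap\Sigma_2=\emptyset$), so $P_{\emptyset}(A_S)\parallel A_S\cong A_S$, which gives $A_S\cong F(P_1(A_S))\parallel F(P_2(A_S))$. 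Equivalently, one can check that $EF1$--$EF4$ of Theorem~\ref{Decomposability under event failure-Theorem} hold vacuously under this passivity ($\Sigma_2=E$ makes $EF1$ and $EF2$ automatic; $\Sigma_1\cap\Sigma_2=\emptyset$ empties $\tilde L(A_S)$ so $EF3$ is vacuous; both refined projections are deterministic, so $EF4$ holds) and invoke that theorem directly.

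For the necessity direction I would apply Lemma~\ref{Passivity properties} with the parallel distributed system taken to be $P_1(A_S)\parallel P_2(A_S)$. By decomposability $A_S\cong P_1(A_S)\parallel P_2(A_S)$, and by hypothesis $A_S\cong F(P_1(A_S))\parallel F(P_2(A_S))$, so both the pre- and post-failure compositions are bisimilar to $A_S$; since $A_S$ is deterministic, bisimilarity with it coincides with language equivalence, so both compositions generate $L(A_S)$. Thus no string, and in particular no single transition, of $P_1(A_S)\parallel P_2(A_S)$ is disabled after the failure, which is exactly the hypothesis of Lemma~\ref{Passivity properties}; that lemma then forces every failed event---and hence the whole of $\bar E_1=E_1$---to be passive.

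The main obstacle I anticipate is conceptual rather than computational: justifying, in the necessity direction, the passage from the bisimilarity $A_S\cong F(P_1(A_S))\parallel F(P_2(A_S))$ to the ``no global transition disabled'' hypothesis of Lemma~\ref{Passivity properties}. The cleanest way to see it directly is by an alphabet count: the composition $F(P_1(A_S))\parallel F(P_2(A_S))$ can only generate events in $\Sigma_1\cup\Sigma_2$, so if some private event of agent~1 in $E_1\setminus E_2$ were \emph{not} relayed by agent~2 it would vanish from that alphabet, and therefore from the language of the composition, contradicting its language equivalence with $A_S$; determinism of $A_S$ is what guarantees this language equivalence in the first place.
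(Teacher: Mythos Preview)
Your proposal is correct and follows essentially the same route as the paper. For sufficiency the paper verifies $EF1$--$EF4$ exactly as in your alternative (using Lemma~\ref{EF1 and EF2 respect to DC1 and DC2} to dispose of $EF1$, $EF2$ via $E_1\backslash E_2=\bar E_2=\emptyset$, and checking $EF3$, $EF4$ directly from $\Sigma_1=\emptyset$, $\Sigma_2=E$); your primary direct argument that composing with the one-state empty-alphabet automaton is the identity is a slight shortcut the paper does not take, but it lands in the same place. For necessity the paper argues by contradiction through Lemma~\ref{Passivity properties} in precisely the spirit of your alphabet-count observation: a non-passive $e\in\bar E_1$ lies outside $\Sigma_1\cup\Sigma_2$, so no transition on $e$ can occur in $F(P_1(A_S))\parallel F(P_2(A_S))$, contradicting bisimilarity with $A_S$.
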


\begin{proof}
\textbf{Sufficiency:}
 Since $\bar{E}_1 = E_1$, from definition of passivity,
Lemma \ref{Passivity properties-2-agents} and $E = E_1\cup E_2$, it
follows that $E_1 \subseteq E_2 = E$ and $E_1\backslash E_2 =
\bar{E}_2 = \emptyset$, and hence, $EF1$ and $EF2$ hold true, due to
Lemma \ref{EF1 and EF2 respect to DC1 and DC2}. Moreover, since
$\Sigma_1 = E_1\backslash \bar{E}_1 = \emptyset$, then $\Sigma_1
\backslash \Sigma_2 = \Sigma_1 = \emptyset$, that makes $EF3$ always
true. Finally, by Lemma \ref{Passivity properties}, $F(P_1(A_S))$
with $\Sigma_1 = \emptyset$ merges into its initial state, with no
nondeterminism, and $F(P_2(A_S))$ with $\Sigma_2 = E$ is bisimilar
to $A_S$ which is deterministic, therefore, $EF4$ is satisfies, as
well. This implies that when $\bar{E}_1 = E_1$, the passivity of
$\bar{E}_1$ leads to $A_S \cong F(P_1(A_S))||F(P_2(A_S))$.

\textbf{Necessity:} The necessity is proven by contradiction.
Suppose that $\bar{E}_1 = E_1$ and $A_S \cong
F(P_1(A_S))||F(P_2(A_S))$, but $\exists e\in \bar{E}_1$, $e$ is not
passive in $E_1$. Then, from Lemma \ref{Passivity properties}, it is
follows that transitions on $e$ cannot evolve in
$F(P_1(A_S))||F(P_2(A_S))$, due to synchronization constraint in
parallel composition, and hence, $A_S \ncong
F(P_1(A_S))||F(P_2(A_S))$ which is a contradiction.
\end{proof}

\section{Conclusions}\label{Conclusion}
This paper proposed a formal method to investigate whether a
decentralized bisimilarity control design remains valid, under
failure of some events in multi-agent systems.
%Necessary and
%sufficient conditions were proposed for the task automaton to remain
%decomposable, in spite of failures in some shared events (on
%communication links). Furthermore, it was shown that under these
%conditions the global specification remains satisfied, against
%failures on passive events.
%
%Firstly, the passivity of failed events was shown to be a necessary
%condition to preserve the decomposability. Then, we have shown that
%for a deterministic task automaton that experiences failures on
%passive events, the task automaton remains decomposable if and only
%if the important decisions on order/switch between two events can be
%accomplished by at least of one the agents after failure; no illegal
%string is allowed by the composition of local task automata, after
%the failure, and no irremovable new nondeterminism is imposed due to
%the failures.
%
%This work, focuses on faults over communication links and considers
%the fault tolerance from a bisimilarity control point of view, and
%secondly addresses the decentralized structure whose both plan and
%specification are given as composition of local components.
%Particularly,
This work is a continuation of
\cite{Automatica2010-2-agents-decomposability,
TAC2011-n-agents-decomposability}, in which necessary and sufficient
condition was given for task automaton decomposition and the
satisfaction of global specification was guaranteed up on
satisfaction of local specifications. This work then defines
 a new notion of passivity under which it is possible to transform
 the decentralized cooperative control problem under event failures
 into the standard decomposability problem in
 \cite{Automatica2010-2-agents-decomposability, TAC2011-n-agents-decomposability}
 and identifies necessary and sufficient conditions to still guarantee
 the supervised concurrent plant to satisfy the global
 specification, in spite of event failures. The passivity of the
 failed events
 is turned to be a necessary condition for the task automaton to
 remain decomposable, and it is found to reflect the failure of
 redundant communication links. It is then proven that a
 decomposable task automaton remains decomposable and satisfied
 after some passive failures if and only if after the failures, the
 team of agents maintain the capability on collective decision
 making on the orders and selections of transitions and preserve
 the collective perceiving of the task such that the parallel composition of
 local task automata neither allow an illegal behavior (a string
 that is not in the global task automaton), nor disallow a legal
 behavior ( a string from the global task automaton).

This result is of practical importance as it provides a sense of
fault-tolerance to the task decomposition and top-down cooperative
control of multi-agent systems, under event failures.

%A future work in this topic is to make an undecomposable task
%automaton decomposable, by modifying the event distribution among
%the agents.

\section{APPENDIX}
\subsection{Examples for Remark \ref{Remarks for determinism}}
\begin{example}\label{Natural Projection Example}
Following example shows an automaton that does not simulate its
natural projections, yet is decomposable. Consider an automaton
$A_S$: \xymatrix@R=0.1cm{
                \ar[r]&\bullet \ar[r]^{a}\ar[dr]_{e_2}&\bullet   \ar[r]^{e_1}&\bullet \ar[r]^{b}&\bullet\\
             && \bullet \ar[ur]_{e_4} }
with the event set $E=E_1\cup E_2$ and local event sets $E_1=\{a, b,
e_1\}$, $E_2=\{a, b, e_2, e_4\}$. In this example, $A_S$ is
decomposable, since it bisimulates the parallel composition of
$P_1(A_S)$: \xymatrix@C=0.5cm{
     \bullet &  \check{\bullet} \ar@/_/[r]_{a}\ar[l]_{b} &  \bullet \ar[l]_{e_1}
         }
and $P_2(A_S)$:
 \xymatrix@C=0.5cm{
     \ar[r]&  \bullet \ar[r]^{e_2} \ar@/_/[rr]_{a}&  \bullet \ar[r]^{e_4}&
     \bullet \ar[r]^{b}&
     \bullet}, although $P_1(A_S)\not\prec A_S$
     (since the string $b$ appears in $P_1(A_S)$, but not in $A_S$),
     and $P_2(A_S)\not\prec A_S$
     (since the string $ab$ appears in $P_2(A_S)$, but not in $A_S$).
\end{example}

As mentioned in Remark \ref{Remarks for determinism}, another
emergent property is that natural projection of local task automata
may lead to nondeterminism of $P_i \left( A_S \right)$, leading to
nondeterminism of $\overset{n}{\underset{i=1}{\parallel} } P_i
\left( A_S \right)$. The decomposability of $A_S$ again concerns
with bisimilarity of $A_S$ and
$\overset{n}{\underset{i=1}{\parallel} } P_i \left( A_S \right)$,
that may happen even if there exist some nondeterministic
$P_i(A_S)$, as it is elaborated in the following example.
\begin{example}\label{decomposable automaton with
nondeterministic projection} Consider the automaton $A_S$:
\xymatrix@R=0.1cm{
                \ar[r]&  \bullet \ar[r]^{e_1} \ar[dr]_{a}&\bullet \ar[r]^{a} &\bullet \ar[r]^{e_2} &\bullet \\
             && \bullet \ar[r]^{e_2} &\bullet }\\ with $E = E_1\cup E_2$, $E_1 = \{a, e_1\}$, $E_2 = \{a, e_2\}$.
$A_S$ is decomposable, as the parallel composition of
$P_1(A_S)$:\xymatrix@R=0.1cm{
                \ar[r]&  \bullet \ar[r]^{e_1} \ar[dr]_{a}&\bullet \ar[r]^{a} &\bullet  \\
             && \bullet } and $P_2(A_S)$:\\ \xymatrix@R=0.1cm{
                \ar[r]&  \bullet  \ar[dr]_{a} \ar[r]^{a} &\bullet \ar[r]^{e_2} &\bullet \\
             && \bullet \ar[r]^{e_2} &\bullet } is bisimilar to $A_S$. Here, $P_2(A_S)$ is not
                deterministic, but it bisimulates the deterministic
                automaton $P_2(A_S)^{\prime}$: \xymatrix@R=0.1cm{
                \ar[r]& \bullet  \ar[r]^{a}  & \bullet \ar[r]^{e_2} &
                \bullet}.
\end{example}

Therefore, a deterministic task automaton $A_S$ may have
nondeterministic natural projections, and consequently, its
$\overset{n}{\underset{i=1}{\parallel} } P_i \left( A_S \right)$ may
become nondeterministic. As a result, determinism of $A_S$ does not
reduce its decomposability in the sense of bisimulation into its
decomposability in the sense of language equivalence (synthesis
modulo language equivalence \cite{Mukund2002}), due to possibility
of nondeterminism of $\overset{n}{\underset{i=1}{\parallel} } P_i
\left( A_S \right)$, as it is further illustrated in the following
example.

\begin{example}\label{Undecomposable DC3-truncation}
Consider the task automaton $A_S$:\\ \xymatrix@R=0.1cm{
              \ar[r]&  \bullet \ar[r]^{e_1} \ar[dr]_{a}   & \bullet  \ar[r]^{a}  & \bullet\ar[r]^{b}& \bullet  \\
             &    &\bullet
                } \ with $E_1=\{a, b, e_1\}$, $E_2=\{a, b\}$,
               leading to
                $P_1(A_S)$:\xymatrix@R=0.1cm{
          \ar[r]&  \bullet \ar[r]^{e_1} \ar[dr]_{a}  & \bullet  \ar[r]^{a}  & \bullet\ar[r]^{b}& \bullet  \\
             &    &\bullet
                },
                 $P_2(A_S)$:\xymatrix@R=0.1cm{
\ar[r]&  \bullet \ar[r]^{a} \ar[dr]_{a}   & \bullet  \ar[r]^{b}  &  \bullet  \\
             &    & \bullet
                }, and\\
$P_1(A_S)||P_2(A_S)$: \xymatrix@R=0.1cm{
\ar[r]&  \bullet \ar[r]^{e_1} \ar[dr]_{a}   & \bullet  \ar[r]^{a} \ar[dr]_{a} & \bullet\ar[r]^{b}& \bullet  \\
 && \bullet &\bullet
                }\  which is not bisimilar to $A_S$. In this example
 $A_S$ is deterministic, $L(\overset{n}{\underset{i=1}{\parallel} } P_i
\left( A_S \right)) = L(A_S)$; however,
$\overset{n}{\underset{i=1}{\parallel} } P_i \left( A_S \right)
\ncong A_S$.
\end{example}
This example also shows that determinism of $A_S$ also does not
reduce its decomposability in the sense of bisimulation into the
separability of its language (\cite{Willner1991}), as
$\overset{n}{\underset{i=1}{\parallel} } P_i \left( A_S \right)
\ncong A_S$, although $A_S$ is deterministic and its language is
separable ($L(A_S)
  = \overset{n}{\underset{i=1}{|} } L(P_i \left( A_S \right))$).

Therefore, in general for a deterministic task automaton
$\overset{n}{\underset{i=1}{\parallel} } P_i \left( A_S \right)
\cong A_S$ is not reduced into $L(A_S)
 = \overset{n}{\underset{i=1}{|} } L(P_i \left( A_S \right))$. But,
under the determinism of bisimulation quotient of all local task
automata ($DC4$), bisimulation-based decomposability is reduced to
language-based decomposability and the top-down design based on
bisimulation, is reduced to language-based top-down design, such
that the entire closed loop system (the parallel composition of
local closed loop systems) bisimulates (or equivalently is language
equivalent to) the global task automaton. In case of $DC4$, the
other three conditions ($DC1$-$DC3$) can be used to characterize the
language separability.

\subsection{Proof for Lemma \ref{Passivity properties}}
Firstly, in order to allow the global transitions, the failed event
$a$ in $E_i$ has to be received from other agents not from its own
sensors and actuator readings, otherwise, no local transitions on
$a$ evolve in either of $F(A_i)$ or $\overset{n}{\underset{i=1}{||}
}F(A_i)$ (since other agents receive $a$ from $A_i$). Therefore, the
failed events have to necessarily be shared events ($loc(a) >1$),
and that after the failure of $a$ in $A_i$, $a$ is excluded from
$E_i$, i.e., $\Sigma_i = E_i\backslash a$, as $a$ is not received to
$A_i$ from other agents. Moreover, due to Definition \ref{parallel
composition}, exclusion of $a$ from $E_i$ allows global transitions
on $a$ with no synchronization restriction from $F(A_i)$. Finally,
the transitions on failed event $a$ has to be replaced with
$\varepsilon$-moves, in order to allow transitions after $a$ in
$A_i$, i.e., $\forall x_1, x_2 \in Q_i$, $\delta_i(x_1, a) = x_2$,
then $\delta^F_i([x_1]_{\Sigma_i}, a) = [x_2]_{\Sigma_i}$,
$[x_1]_{\Sigma_i} = [x_2]_{\Sigma_i}$ and $F(A_i) = P_{E_i\backslash
a}(A_i)$ (otherwise a transition of $\delta_i^F(\delta_i^F(x, a),
e)$ will be disabled due to stopping of execution of $\delta_i^F(x,
a)$). It should be noted that, if there are no traditions after
$\delta_i(x, a)$ (i.e., $\forall e\in E_i$:
$\neg\delta_i(\delta_i(x, a), e)!$, then stoping of $\delta_i(x, a)$
is identical to replacing this transition with an
$\varepsilon$-move.
%
%Therefore, $\forall z_1, z_2 \in Z$, $\delta_{||}(z_1, a) = z_2$ and
%$\delta_{||}^F(z_1, a) = z_2$, after failure of $a$ in $A_i$,
%requires that $\Sigma_i = E_i\backslash a$, $E =
%\overset{n}{\underset{i=1}{\cup}}\Sigma_i$, and $F(A_i) =
%P_{E_i\backslash a}(A_i)$, that collectively result in $\forall x_1,
%x_2 \in Q_i$, $\delta_i(x_1, a) = x_2$, then
%$\delta^F_i([x_1]_{\Sigma_i}, a) = [x_2]_{\Sigma_i}$,
%$[x_1]_{\Sigma_i} = [x_2]_{\Sigma_i}$ and $\forall j\in\{1, \ldots,
%n\}$, if $a$ does not fail in $A_j$ then $a \in\Sigma_j$, $\forall
%x_1, x_2 \in Q_j$, $\delta_j(x_1, a) = x_2$, then
%$\delta^F_j([x_1]_{\Sigma_j}, a) = [x_2]_{\Sigma_j}$. From
%definition of parallel composition, therefore, $\forall z_1, z_2 \in
%Z$, $\delta_{||}(z_1, a) = z_2$ if $\exists x_1, x_2 \in Q_i$,
%$\delta_i(x_1, a) = x_2$, $z_1[i] = x_1$, $z_2[i] = x_2$, where
%$z_1[i]$ is the $i-th$ component of $z$ in
%$\overset{n}{\underset{i=1}{||}}F(A_i)$.
These collectively mean that preserving of global transitions in
$\overset{n}{\underset{i=1}{||}}F(A_i)$ requires then local failures
to be passive.
%
%For a non-passive failure, on the other hand, the existence of
%$i\in\{1, \ldots, n\}$, such that $\Sigma_i = E_i$, $\delta_i(x_1,
%e) = x_2$, and $\neg\delta^F_i([x_1]_{\Sigma_i}, e)!$ results in
%$\neg\delta_{||}(z_1, e)!$ for any $x_1\in Q_i$, $\delta_i(x_1, e)!
%$, $z_1[i] = x_1$.

\subsection{Proof for Lemma \ref{EF1,2 and DC1,2}}
Passivity of all $\bar{E}_i$, $i\in \{1, \ldots, n\}$, due to
definition of passivity, leads to $\Sigma_i = E_i\backslash
\bar{E}_i \subseteq E_i$, and hence, the expression $[\exists
E_i\in\{E_1, \cdots, E_n\}, \{e_1, e_2\}\subseteq E_i]$ in the
antecedent of $DC1$ and $DC2$ leads to $[\exists
\Sigma_i\in\{\Sigma_1, \cdots, \Sigma_n\}, \{e_1, e_2\}\subseteq
\Sigma_i]$, replacing $E_i$ with $\Sigma_i = E_i\backslash
\bar{E}_i$.

\subsection{Proof for Lemma \ref{EF4 and DC4}}
Any nondeterminism in $F(P_i(A_S))$ appears either due to
nondeterminism from $P_i(A_S)$ or newly formed nondeterminism
because of replacing of passive events by $\varepsilon$.

In the first case, from decomposability of $A_S$, $DC4$ says that
for any $x, x_1, x_2 \in Q_i$, $e\in E_i\backslash \bar{E}_i$, $t\in
E_i^*$, $x_1 \neq x_2$, $\delta_i (x, e) = x_1$, $\delta_i (x, e) =
x_2$: $\delta_i (x_1, t)! \Leftrightarrow \delta_i (x_2, t)!$, i.e.,
$\delta_i^F([x]_{\Sigma_i}, e) = [x_1]_{\Sigma_i}$,
$\delta_i^F([x]_{\Sigma_i}, e) = [x_2]_{\Sigma_i}$:
$\delta_i^F([x_1]_{\Sigma_i}, p_{\Sigma_i}(t))! \Leftrightarrow
\delta_i^F([x_2]_{\Sigma_i}, p_{\Sigma_i}(t))!$, which is $DC4$ for
$F(P_i(A_S))$, with refined local event set $\Sigma_i$.

For the second case, any newly appeared nondeterminism is induced by
transitions from the original local task automat, in the following
form. $\exists i\in\{1, \ldots, n\}, x, x_1, x_2 \in Q_i$, $t_1\in
\bar{E}_i^*$, $e\in E_i\backslash \bar{E}_i$, $t\in E_i^*$, $x_1
\neq x_2$, $\delta_i (x, t_1e) = x_1$, $\delta_i (x, e) = x_2$ then
$[x]_{\Sigma_i} = [\delta_i^F ([x]_{\Sigma_i}, t_1)]_{\Sigma_i}$,
and hence, $EF4$ becomes $\delta_i^F ([x]_{\Sigma_i}, e) =
[x_1]_{\Sigma_i}$, $\delta_i^F ([x]_{\Sigma_i}, e) =
[x_2]_{\Sigma_i}$: $\delta_i^F ([x_1]_{\Sigma_i}, p_{\Sigma_i}(t))!
\Leftrightarrow \delta_i^F ([x_2]_{\Sigma_i}, p_{\Sigma_i}(t))!$,
which is again equivalent to $DC4$ for $F(P_i(A_S))$.

\subsection{Proof for Lemma \ref{Passivity properties-2-agents}}
The first item is proven based on the fact that if $\exists e\in
\bar{E}_1\cap \bar{E}_2$, then $snd_e(1) = \emptyset \wedge snd_e(2)
= \emptyset$ which is impossible, due to Remark \ref{Meaning of
Passivity} that requires $send_e(i) = \emptyset \wedge rec_e(i) \neq
\emptyset$ for an event $e$ to be passive in $E_i \in \{E_1, E_2\}$,
in two agent case.

The second item, comes from passivity of $\bar{E}_1$ and $\bar{E}_2$
that implies that $\forall e\in \bar{E}_i$, $i = 1, 2$, $snd_e(i) =
\emptyset\wedge rcv_e (i)\neq \emptyset$, and hence $loc(e)>1$ which
means $e\in E_j$, $j\in\{1, 2\}\backslash \{i\}$, i.e., $e\in
E_1\cap E_2$.

For the last item, from the second item and $\bar{E}_1\cap \bar{E}_2
= \emptyset$ we respectively have $\bar{E}_1, \bar{E}_2 \subseteq
E_1\cap E_2$ and $\bar{E}_1\subseteq \bar{E}_2^{\prime}$,
$\bar{E}_2\subseteq \bar{E}_1^{\prime}$ (In this proof, prime
operation stands for the set complements, where the $E_1\cup E_2$ is
considered as the universal set). Consequently, $\Sigma_1 \backslash
\Sigma_2$ = $(E_1\backslash \bar{E}_1)\backslash (E_2\backslash
\bar{E}_2)$ = $(E_1\cap \bar{E}_1^{\prime})\cap \left(E_2\cap
\bar{E}_2^{\prime}\right)^{\prime}$ = $(E_1\cap
\bar{E}_1^{\prime})\cap \left(E_2^{\prime}\cup \bar{E}_2\right)$ =
$[(E_1\cap \bar{E}_1^{\prime})\cap E_2^{\prime}] \cup
 [(E_1\cap \bar{E}_1^{\prime})\cap \bar{E}_2]$ =
$[E_1\cap \left(\bar{E}_1\cup E_2\right)^{\prime}] \cup
 [(E_1\cap \bar{E}_2) \cap \bar{E}_1^{\prime} ]$ =
$(E_1\cap E_2^{\prime}) \cup
 (\bar{E}_2 \cap \bar{E}_1^{\prime})$ =
$(E_1\backslash E_2) \cup
 \bar{E}_2 $. Similarly,
$\Sigma_2 \backslash \Sigma_1 = (E_2\backslash E_1)\cup \bar{E}_1$.

\subsection{Proof for Lemma \ref{EF1 and EF2 respect to DC1 and
DC2}}
%Firstly, the equivalence of $DC1$ and $DC2$ in Lemmas
%\ref{Task Automaton Decomposition} and \ref{EF1 and EF2 respect to
%DC1 and DC2} is given by the following lemma, substituting $n = 2$.
%\begin{lemma}\label{Decision Making n agents}(Lemma $6$ in
%\cite{TAC2011-n-agents-decomposability}) Consider a deterministic
%automaton $A_S = (Q, q_0, E = \mathop{||}\limits_{i = 1}^n\cup E_i,
%\delta)$ and natural projections $P_i$, $i=1,...,n$. Then following
%statements are equivalent
%\begin{enumerate}\item $\forall e_1 ,e_2 \in E,q \in Q, s \in E^* ,\forall E_i \in
%\left\{ {E_1 ,...,E_n } \right\},\left\{ {e_1 ,e_2 } \right\}
%\not\subset E_i $:
%\begin{itemize}
%\item $DC1$: $[\delta (q,e_1)! \wedge \delta (q,e_2)! ]
%  \Rightarrow [\delta (q,e_1 e_2)! \wedge \delta (q,e_2 e_1 )!]$;
% \item $DC2$: $\delta \left( {q,e_1 e_2 s} \right)! \Leftrightarrow \delta \left( {q,e_2 e_1 s}
% \right)!$;
% \end{itemize}
%\item
%\begin{itemize}
%\item $DC1$: $\forall e_1,
%e_2 \in E, q\in Q$: $[\delta(q,e_1)!\wedge \delta(q,e_2)!]\\
%\Rightarrow [\exists E_i\in\{E_1, \ldots, E_n\}, \{e_1,
%e_2\}\subseteq E_i]\vee[\delta(q, e_1e_2)! \wedge \delta(q,
%e_2e_1)!]$;
%\item $DC2$: $\forall e_1, e_2 \in E,  q\in Q$, $s\in E^*$: $[\delta(q,
%e_1e_2s)!\vee \delta(q, e_2e_1s)!]\\ \Rightarrow [\exists
%E_i\in\{E_1, \ldots, E_n\}, \{e_1, e_2\}\subseteq E_i]\vee [
%\delta(q, e_1e_2s)!\wedge \delta(q, e_2e_1s)!]$.
%\end{itemize}
%\end{enumerate}
%\end{lemma}
%Now,
To prove this lemma, firstly, the decomposability result for two
agents is recalled as
\begin{lemma} (Theorem 1 in \cite{Automatica2010-2-agents-decomposability})) \label{Task Automaton
Decomposition} A deterministic  automaton $A_S=(Q, q_0, E=E_1\cup
E_2, \delta)$ is decomposable with respect to parallel composition
and natural projections $P_i$, $i=1,2$, such that $A_S\cong
P_1(A_S)||P_2(A_S)$ if and only if it satisfies the following
decomposability conditions: $\forall e_1 \in E_1\backslash E_2, e_2
\in E_2\backslash E_1, q\in Q$, $s\in E^*$,
\begin{itemize}\item $DC1$: $[\delta(q,e_1)!\wedge
\delta(q,e_2)!]\Rightarrow [\delta(q, e_1e_2)! \wedge \delta(q,
e_2e_1)!]$;
\item $DC2$: $\delta(q, e_1e_2s)!\Leftrightarrow \delta(q,
e_2e_1s)!$;
 \item $DC3$:
$\forall s, s^{\prime} \in E^*$, sharing the same first appearing
common event $a\in E_1 \cap E_2$, $s\neq s^{\prime}$, $q\in Q$:
$\delta(q, s)! \wedge \delta(q, s^{\prime})! \Rightarrow \delta(q,
p_1(s)|p_2(s^{\prime}))! \wedge \delta(q, p_1(s^{\prime})|p_2(s))!$,
and
\item $DC4$: $\forall i\in\{1, 2\}$, $x, x_1, x_2 \in Q_i$, $x_1\neq x_2$, $e\in E_i$, $t\in E_i^*$,
 $\delta_i (x, e)=  x_1$,  $\delta_i (x, e)=
x_2$: $\delta_i (x_1, t)! \Leftrightarrow \delta_i(x_2, t)!$.
\end{itemize}
\end{lemma}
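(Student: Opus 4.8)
The plan is to prove both directions of the biconditional by working directly with the bisimulation between $A_S$ and $P_1(A_S)\|P_2(A_S)$, whose states are pairs of equivalence classes $(\,[x]_{E_1},[y]_{E_2}\,)$. I would take as the candidate relation the ``diagonal'' $R=\{(q,(\,[q]_{E_1},[q]_{E_2}\,)):q\in Q\}$, tying each reachable state $q$ of $A_S$ to the pair of its own $\sim_{E_1}$- and $\sim_{E_2}$-classes, and reduce everything to showing that $R$ together with $R^{-1}$ is a bisimulation precisely when $DC1$--$DC4$ hold. Since $A_S$ is deterministic while $P_1(A_S)$ and $P_2(A_S)$ may be nondeterministic (Example~\ref{decomposable automaton with nondeterministic projection}), genuine bisimulation rather than language equivalence is at stake (Example~\ref{Undecomposable DC3-truncation}), and the four conditions will turn out to control exactly the four ways it can break.

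\textbf{Sufficiency.} Assuming $DC1$--$DC4$, I would verify that $R$ is a bisimulation by induction on the length of the string read from $q_0$, the initial states being related by construction. The ``zig'' direction, that each transition $\delta(q,e)=q'$ of $A_S$ is matched in the composition, is routine from the definitions of natural projection and parallel composition. The ``zag'' direction is the heart: I must show that every reachable composite state is in fact diagonal, i.e.\ of the form $(\,[q]_{E_1},[q]_{E_2}\,)$, and that any $e$ it enables yields $\delta(q,e)!$ with successor again in $R$. I would split on the type of $e$. For $e\in E_1\setminus E_2$ the class $[q]_{E_1}$ enables $e$ because some $\tilde q\sim_{E_1}q$ does, where $\tilde q$ is reached from $q$ by a string over $E_2\setminus E_1$; $DC2$ then commutes $e$ past those events to give $\delta(q,e)!$, with $DC1$ handling the branching at $q$, and symmetrically for $e\in E_2\setminus E_1$. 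The genuinely hard case is $e\in E_1\cap E_2$, where both projections must synchronize on $e$ arising from possibly distinct global strings; $DC3$ is exactly what supplies a single string of $A_S$ realizing both projections and enabling $e$, so the diagonal invariant survives shared events. Finally $DC4$ guarantees that the bisimulation quotients of the $P_i(A_S)$ are deterministic, so that the nondeterministic choices of the composition never lead to related states with divergent futures; this is what makes $R^{-1}$ a simulation as well and upgrades $R$ to a symmetric bisimulation.

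\textbf{Necessity.} For the converse I would argue by contraposition, showing that the failure of each condition manufactures a discrepancy no symmetric relation can absorb. If $DC1$ fails at some $q$ with $\delta(q,e_1)!$, $\delta(q,e_2)!$, $e_1\in E_1\setminus E_2$, $e_2\in E_2\setminus E_1$ but $\neg\delta(q,e_1e_2)!$, then as $e_1,e_2$ are private to the two agents they interleave freely in $P_1(A_S)\|P_2(A_S)$, so the composition admits $e_1e_2$ while $A_S$ forbids it; $DC2$ and $DC3$ fail analogously, producing a reordered, respectively an interleaved, string present in the composition but absent from $A_S$, in each case defeating bisimilarity. The failure of $DC4$ is the most delicate: it means some nondeterministic branch of a $P_i(A_S)$ has two same-event successors with distinguishable futures collapsing to one situation, so along any relation the composition reaches a state whose future strictly differs from that of the deterministic $A_S$-state it must be matched with, and no symmetric relation can simulate both ways at once (even when language equivalence happens to persist, as in Example~\ref{Undecomposable DC3-truncation}).

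\textbf{Main obstacle.} I expect the decisive difficulty to be the shared-event case of the zag direction under $DC3$: assembling, from two local witness strings that may traverse $A_S$ along different branches, one global string projecting correctly onto both components and enabling the shared event, while propagating an induction hypothesis strong enough to keep the successor diagonal and hence inside $R$. The secondary obstacle is the $DC4$ bookkeeping on both sides, where the interplay between the nondeterminism of the projections and the determinism of $A_S$ must be tracked precisely to turn two one-way simulations into a single symmetric bisimulation.
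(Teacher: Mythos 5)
Your proposal cannot be checked against a proof in this paper, because the paper only \emph{recalls} this lemma (it is Theorem 1 of \cite{Automatica2010-2-agents-decomposability}); judged on its own merits, however, the proposal has a genuine gap in the sufficiency direction. Your whole argument rests on the invariant that every reachable state of $P_1(A_S)\|P_2(A_S)$ is ``diagonal,'' i.e.\ of the form $([q]_{E_1},[q]_{E_2})$ for a single $q\in Q$. That invariant is false even when $DC1$--$DC4$ all hold, and the paper's own Example~\ref{decomposable automaton with nondeterministic projection} (which you cite) refutes it. There $E_1=\{a,e_1\}$, $E_2=\{a,e_2\}$, and $A_S$ has transitions $q_0\overset{e_1}{\rightarrow}q_1\overset{a}{\rightarrow}q_2\overset{e_2}{\rightarrow}q_3$ and $q_0\overset{a}{\rightarrow}q_4\overset{e_2}{\rightarrow}q_5$. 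Since $e_1\notin E_2$, we get $[q_0]_{E_2}=\{q_0,q_1\}$, and $P_2(A_S)$ is nondeterministic: from $[q_0]_{E_2}$ the event $a$ leads to both $[q_2]_{E_2}=\{q_2\}$ and $[q_4]_{E_2}=\{q_4\}$. Meanwhile $[q_4]_{E_1}=\{q_4,q_5\}$. Hence after the shared event $a$ the composition reaches the state $\left([q_4]_{E_1},[q_2]_{E_2}\right)$, and since $\{q_4,q_5\}\cap\{q_2\}=\emptyset$ this state has no common representative: it is reachable but not diagonal, although $A_S$ is decomposable. Consequently your relation $R\cup R^{-1}$ is not a bisimulation: the composite transition on $a$ into $\left([q_4]_{E_1},[q_2]_{E_2}\right)$ cannot be matched by any $q'$ with $(q',([q_4]_{E_1},[q_2]_{E_2}))\in R$, so $R^{-1}$ fails to be a simulation at the very first step.

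The repair is not cosmetic: what $DC4$ actually provides is weaker than your invariant, namely that nondeterministic same-event successors in a projection have identical futures, so non-diagonal reachable composite states are bisimilar \emph{within the composition} to diagonal ones. The relation therefore has to be enlarged accordingly --- e.g.\ relate $\delta(q_0,s)$ to \emph{every} composite state reachable via the string $s$, or equivalently pass first to the deterministic bisimulation quotients of $P_i(A_S)$ (whose existence is exactly what $DC4$ grants) and run your diagonal argument there. Two smaller points: in your zag case for $e\in E_1\backslash E_2$, recall that $\sim_{E_1}$ is an equivalence \emph{closure}, so a witness $\tilde q\sim_{E_1}q$ need not be reachable from $q$ by a forward string of non-$E_1$ events, and $DC2$ only commutes forward --- this needs explicit handling; and in the necessity direction your $DC1$--$DC3$ cases (which produce a language discrepancy) are sound, but the $DC4$ case needs an actual construction of the unmatched behavior rather than the assertion that ``no symmetric relation can simulate both ways at once.''
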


Now, in order to prove the equivalence of two cases in lemma
\ref{EF1 and EF2 respect to DC1 and DC2}, one needs to prove that
the set $\{\bar{E}_1 \times E_1\backslash E_2, \bar{E}_2 \times
E_2\backslash E_1, \bar{E}_1 \times \bar{E}_2\}$ in $EF1$ and $EF2$
is equal to the set $\{(\Sigma_1\backslash \Sigma_2) \times
(\Sigma_2\backslash \Sigma_1)\}$ in $DC1_{\Sigma}$ and
$DC1_{\Sigma}$ (decomposability conditions $DC1$ and $DC2$ with
respect to $\Sigma_1$ and $\Sigma_2$).

From lemma \ref{Passivity properties-2-agents}, $e_1\in \Sigma_1
\backslash \Sigma_2$, $e_2\in \Sigma_2 \backslash \Sigma_1$  is
equivalent to $e_1\in (E_1\backslash E_2)\cup \bar{E}_2$, $e_2 \in
(E_2\backslash E_1)\cup \bar{E}_1$ which means that $e_1\in
E_1\backslash E_2 \vee e_1\in \bar{E}_2$ and $e_2\in E_2\backslash
E_1 \vee e_2\in \bar{E}_1$, leading to four possible cases: $(e_1\in
E_1\backslash E_2\wedge e_2\in E_2\backslash E_1)$, $(e_1\in
E_1\backslash E_2\wedge e_2\in \bar{E}_1)$, $(e_1\in \bar{E}_2\wedge
e_2\in E_2\backslash E_1)$ or $(e_1\in \bar{E}_2\wedge e_2\in
\bar{E}_1)$.

Now, Lemma \ref{EF1 and EF2 respect to DC1 and DC2} is proven as
follows. For the first case, since decomposability of $A_S$ implies
$DC1$ and $DC2$, then, $\forall e_1\in E_1\backslash E_2, e_2\in
E_2\backslash E_1$, $q\in Q$, $s\in E^*$: $(\ref{Switch})$ and
$(\ref{Order})$ hold true. For the second, third and fourth cases,
i.e., when $(e_1\in E_1\backslash E_2\wedge e_2\in \bar{E}_1)$,
$(e_1\in \bar{E}_2\wedge e_2\in E_2\backslash E_1)$ or $(e_1\in
\bar{E}_2\wedge e_2\in \bar{E}_1)$, then $(\ref{Switch})$ and
$(\ref{Order})$ are guarantee by $EF1$ and $EF2$. Therefore,
provided the decomposability of $A_S$, $EF1$ and $EF2$,
$(\ref{Switch})$ and $(\ref{Order})$ become true for all $e_1\in
\Sigma_1\backslash \Sigma_2$, $e_2\in \Sigma_2\backslash \Sigma_1$.
This means that $EF1$ and $EF2$ are respectively equivalent to $DC1$
and $DC2$ after failures (for $\Sigma_1$ and $\Sigma_2$).

%%%%%%%%%%%%%%##########################%%%%%%%%%%%%%%%%%%%%%%%%%%%%
%%%%%%%%%%%%%%%%%%%%%%%%%%%%%%%%%%%%%%%%%%%%%%%%%%%%%%%%%%%%%%%%%%%%
%% References with bibTeX database:
\bibliographystyle{IEEEtran}
\bibliography{Ref_9}
%% Authors are advised to submit their bibtex database files. They are
%% requested to list a bibtex style file in the manuscript if they do
%% not want to use model5-names.bst.

%% References without bibTeX database:

% \begin{thebibliography}{00}

%% \bibitem must have one of the following forms:
%%   \bibitem[Jones et al.(1990)]{key}...
%%   \bibitem[Jones et al.(1990)Jones, Baker, and Williams]{key}...
%%   \bibitem[Jones et al., 1990]{key}...
%%   \bibitem[\protect\citeauthoryear{Jones, Baker, and Williams}{Jones
%%       et al.}{1990}]{key}...
%%   \bibitem[\protect\citeauthoryear{Jones et al.}{1990}]{key}...
%%   \bibitem[\protect\astroncite{Jones et al.}{1990}]{key}...
%%   \bibitem[\protect\citename{Jones et al., }1990]{key}...
%%   \harvarditem[Jones et al.]{Jones, Baker, and Williams}{1990}{key}...
%%

% \bibitem[ ()]{}

% \end{thebibliography}

\end{document}